\newtheorem{theorem}{Theorem}
\newtheorem{lemma}{Lemma}
\newtheorem{proposition}{Proposition}
\theoremstyle{definition}
\newcommand{\epsi}[0]{ \varepsilon }
\newcommand{\reals}[0]{ \mathbb{R}^+ }
\newcommand{\oh}[1]{O\left( #1 \right)}
\newcommand{\ig}{InterlaceGreedy\xspace}
\newcommand{\fig}{\texttt{FIG}\xspace}
\newcommand{\add}{\texttt{ADD}\xspace}
\newcommand{\stopGain}{\epsi M/ n}
\DeclareMathOperator*{\argmax}{arg\,max}
\begin{document}
\title{Interlaced Greedy Algorithm for Maximization of Submodular Functions in Nearly Linear Time}
%\title{Fast Algorithms for Maximizing (Non-Monotone) Submodular Functions Under Cardinality Constraint}
\author{Alan Kuhnle \\ Department of Computer Science \\ Florida State University \\ \texttt{akuhnle@fsu.edu}}

\maketitle
\begin{abstract}
A deterministic approximation algorithm is presented
for the maximization of non-monotone submodular functions over
a ground set of size $n$ subject to
cardinality constraint $k$; the algorithm is based upon the idea of
interlacing two greedy procedures. The algorithm
uses interlaced, thresholded
greedy procedures to obtain tight ratio $1/4 - \epsi$ in
$O \left( \frac{n}{\epsi} \log \left( \frac{k}{\epsi} \right) \right)$
queries of the objective function,
which improves upon both the ratio and 
the quadratic time complexity of the previously fastest deterministic
algorithm for this problem.
The algorithm is validated in the context of two applications
of non-monotone submodular maximization, on which it
outperforms the fastest deterministic and randomized
algorithms in prior literature.
\end{abstract}
\section{Introduction}
A nonnegative function $f$ defined on subsets of a ground set $U$
of size $n$
is \emph{submodular} iff 
for all $A,B \subseteq U$, $x \in U \setminus B$, 
such that $A \subseteq B$, it holds that 
$ f\left(B \cup x \right) - f(B) \le f\left(A \cup x \right) - f(A).$
Intuitively, the property of submodularity captures 
diminishing returns.
Because of a rich variety of applications, the maximization
of a nonnegative submodular % \footnote{For technical definitions of terms used in the Introduction, the reader is referred to Preliminaries below.}
function with respect to a
cardinality constraint (MCC)
has a long history of study \citep{Nemhauser1978}. 
Applications of MCC 
include viral marketing \citep{Kempe2003}, network monitoring
\citep{Leskovec2007}, %. Applications with $p$ matroid constraints 
video summarization \citep{Mirzasoleiman2018}, and MAP Inference
for Determinantal Point Processes \citep{Gillenwater2012},
among many others.
In recent times, the amount of data generated by many applications has been increasing
exponentially; therefore, linear or sublinear-time algorithms are needed.

If a submodular function $f$ is monotone\footnote{The function $f$ is monotone if for all $A \subseteq B$, $f(A) \le f(B)$.},
greedy approaches for MCC have proven effective 
and nearly optimal, both in terms of query complexity and
approximation factor: subject to a cardinality constraint $k$,
a simple greedy algorithm gives 
a $(1 - 1/e)$ approximation ratio in $O(kn)$ queries \citep{Nemhauser1978},
where $n$ is the size of the instance. Furthermore, this ratio is optimal 
under the value oracle model \citep{Nemhauser1978a}.
\citet{Badanidiyuru2014}  
sped up the greedy algorithm to require 
$O\left( \frac{n}{\epsi} \log \frac{n}{\epsi} \right)$ queries while sacrificing only a small $\epsi > 0$ 
in the approximation ratio, while \citet{Mirzasoleiman2014} developed
a randomized $(1 - 1/e - \epsi)$ approximation in $O(n / \epsi)$ queries.

When $f$ is non-monotone,
the situation is very different; no subquadratic deterministic algorithm
has yet been developed. Although a linear-time, randomized
$(1/e - \epsi)$-approximation has been developed
by \citet{Buchbinder2015a}, which requires $O\left( \frac{n}{\epsi^2} \log \frac{1}{\epsi} \right)$
queries, the performance guarantee of this algorithm holds only in
expectation. A derandomized version of the algorithm with ratio $1/e$ has
been developed by \citet{Buchbinder2018} but has time complexity $O(k^3 n )$.
Therefore, in this work, 
an emphasis is placed
upon the development of nearly linear-time, deterministic approximation algorithms.
%is a gap between the best hardness result and approximation factor. % even when $f$ is unconstrained,
% maximization is NP-hard \citep{Feige2011a}. Furthermore,
%The best ratio obtained for randomized and deterministic algorithms differs;
%In this work, the problem considered is the maximization
%of  %cardinality constraint or
%the intersection of $p$ matroid constraints. % That is,
% given submodular $f: 2^{[n]} \to \reals$ and $k \in [n + 1]$, determine 
% \begin{equation}
%   \argmax_{|X| \le k} f(X).
% \end{equation}
% This fact is significant as
% all randomized approximation ratios discussed in this paper have guarantees that hold only 
% in expectation rather than with high probability;
% derandomization in the context of submodular optimization
% has proved difficult \citep{Buchbinder2018}.
% The fastest algorithms in prior literature for MCC
% are as follows. 
% The fastest deterministic approximation algorithm 
% is the $(1/6 - \epsi)$-approximation\footnote{To get this ratio, the determistic $(1/2-\epsi)$-approximation for unconstrained non-monotone maximization in $O(n / \epsi)$ queries of \citet{Buchbinder2018} is used as a subroutine.}  of \citet{Gupta2010} which requires
% $O( kn )$ queries of the objective function, while the fastest
% % search algorithm of \citet{Lee2010a}, which
% % requires $O( n^4 )$ queries to $f$ to achieve approximation ratio $1/4$. 
%  Since the approximation ratio of $1/e - \epsi$ only holds in expectation,
% in this work 
%Other randomized algorithms achieve better ratios \citet{} by using the continuous 
%extension \ldots but are very inefficient. 

\subsection*{Contributions}
The deterministic approximation algorithm InterlaceGreedy (Alg. \ref{alg:tandem}) is provided
for maximization of a submodular function subject to a cardinality constraint (MCC).
InterlaceGreedy achieves ratio $1/4$ in $O(kn)$ queries to the objective function. %alternating between greedy selection into two disjoint sets. 
A faster version of the algorithm is formulated
in FastInterlaceGreedy (Alg. \ref{alg:fast-tandem}), which achieves ratio $(1/4 - \epsi)$
in $O\left( \frac{n}{\epsi} \log \frac{k}{\epsi} \right)$ queries.
%$(1/4 - \epsi)$-approximation FastInterlaceGreedy (Theorems \ref{thm:ftg}, \ref{thm:ftg-speed}) for MCC 
In Table \ref{table:cc}, the relationship is shown to the fastest deterministic and randomized algorithms for MCC in prior literature.

Both algorithms operate
by interlacing two greedy procedures together in a novel manner; 
that is, the two greedy procedures alternately select elements into disjoint sets 
and are disallowed from selection of the same element. 
This technique is demonstrated first with the interlacing of two
standard greedy procedures in InterlaceGreedy, before interlacing thresholded
greedy procedures developed by \citet{Badanidiyuru2014} for monotone submodular functions
to obtain the algorithm FastInterlaceGreedy.
% The proof adapts the greedy argument for monotone, submodular functions under a matroid
% constraint to non-monotone functions. It requires a re-ordering property of independent
% sets that holds under cardinality constraint but may not hold under more general constraints.

The algorithms are validated in the context of cardinality-constrained
maximum cut and social network monitoring, which are both instances of MCC. In
this evaluation, FastInterlaceGreedy is more than an order of magnitude faster
than the fastest
deterministic algorithm \citep{Gupta2010} and is both faster and obtains better solution
quality than the fastest
randomized algorithm \citep{Buchbinder2015a}. The source code
for all implementations is available at
\url{https://gitlab.com/kuhnle/non-monotone-max-cardinality}.
\begin{table} \caption{Fastest algorithms for cardinality constraint} \centering \label{table:cc}
%\begin{center} \centering contexts
\begin{tabular}{ |c|c|c|c| } 
 \hline
  \textbf{Algorithm} & \textbf{Ratio} & \textbf{Time complexity} & \textbf{Deterministic?} \\ \hline
  FastInterlaceGreedy (Alg. \ref{alg:fast-tandem}) & $1/4 - \epsi$ & $O\left( \frac{n}{\epsi} \log \frac{k}{\epsi} \right)$ & Yes \\ \hline
  %\citet{Mirzasoleiman2018} & $4 p + O\left( \sqrt{p} \right)$ & $O\left(p \sqrt{p} kn \log (k) \right)$ & Yes \\ \hline
  \citet{Gupta2010} & $1/6 - \epsi$ & $O \left(nk + \frac{n}{\epsi}\right)$ & Yes \\ \hline
  \citet{Buchbinder2015a} & $1/e - \epsi$ & $O\left( \frac{n}{\epsi^2} \log \frac{1}{\epsi} \right)$ & No \\ 
 \hline
\end{tabular}
%\end{center}
\end{table}

\paragraph{Organization}
The rest of this paper is organized as follows. 
Related work and preliminaries 
on submodular optimization
are discussed in the rest of this section.
%Preliminaries are discussed in Section \ref{sec:prelim}.
In 
Section \ref{sec:interlace}, InterlaceGreedy and FastInterlaceGreedy are
presented and analyzed. 
Experimental validation is provided in Section \ref{sec:exp}.
%Finally, the paper is concluded in Section \ref{sec:con}.
\subsection*{Related Work} \label{sec:rw}
The literature on submodular optimization comprises many works. 
In this section, a short review of relevant techniques is given for
MCC; that is, maximization of non-monotone, submodular functions
over a ground set of size $n$ with cardinality constraint $k$.
For further information on other types of submodular optimization, 
interested readers are directed to the survey of 
\citet{Buchbinder2018a} and references therein.

%\paragraph{Non-Monotone Maximization}
%\citep{Feige2011a} 
A deterministic local search algorithm was developed by
\citet{Lee2010a}, which achieves ratio $1/4 - \epsi$
in $O(n^4 \log n)$ queries. This algorithm runs two approximate local search
procedures in succession. By contrast, the algorithm FastInterlaceGreedy
employs interlacing of greedy procedures to obtain the same ratio in
$O\left( \frac{n}{\epsi} \log \frac{k}{\epsi} \right)$ queries.
In addition, a randomized local search algorithm was formulated by
\citet{Vondrak2013}, which achieves ratio $\approx 0.309$
in expectation.

\citet{Gupta2010} developed a deterministic, iterated greedy approach,
wherein two greedy procedures are run in succession and an algorithm 
for unconstrained submodular maximization are employed.
This approach requires $O(nk)$ queries and has ratio
$1/(4 + \alpha)$, where $\alpha$ is the inverse ratio of the employed
subroutine for unconstrained, non-monotone submodular maximization;
under the value query model, the smallest possible value for $\alpha$
is 2, as shown by \citet{Feige2011a}, so this ratio is at most $1/6$. 
The iterated greedy approach of \citet{Gupta2010} 
first runs one standard greedy algorithm to completion,
then starts a second standard greedy procedure; this differs from the interlacing 
procedure which runs two greedy procedures concurrently and alternates between the
selection of elements.
The algorithm of \citet{Gupta2010}
is experimentally compared to FastInterlaceGreedy in Section \ref{sec:exp}.
The iterated greedy approach of \citet{Gupta2010} was extended and analyzed under
more general constraints by a series of works: \citet{Mirzasoleiman2016,Feldman2017,Mirzasoleiman2018}.

An elegant randomized greedy algorithm of
\citet{Buchbinder2014}
achieves expected ratio $1/e$ in $O(kn)$ queries for MCC; 
this algorithm
was derandomized by \citet{Buchbinder2018}, but
the derandomized version requires $\oh{k^3n}$ queries. 
The randomized version was sped up in \citet{Buchbinder2015a} to
achieve expected ratio $1/e - \epsi$ and require 
$O\left( \frac{n}{\epsi^2} \log \frac{1}{\epsi} \right)$ queries.
Although this algorithm has better time complexity than FastInterlaceGreedy,
the ratio of $1/e - \epsi$ holds only in expectation, which is much 
weaker than a deterministic approximation ratio. 
The algorithm of \citet{Buchbinder2015a} is experimentally
evaluated in Section \ref{sec:exp}.

Recently, an improvement in the adaptive complexity of MCC was made by
\citet{Balkanskia}. Their algorithm, BLITS, requires $O \left( \log^2 n \right)$
adaptive rounds of queries to the objective, where the queries within each
round are independent of one another and thus can be parallelized easily. 
Previously the best adaptivity was the trivial $O(n)$. However, each round
requires $\Omega( OPT^2 )$ samples to approximate expectations, which for the applications
evaluated in Section \ref{sec:exp} is $\Omega( n^4 )$. 
For this reason, BLITS is evaluated as a heuristic in comparison with the proposed algorithms in Section \ref{sec:exp}. Further improvements in
adaptive complexity have been made by \citet{Fahrbach2018a}
and \citet{Ene2019}. 

Streaming algorithms for MCC make only
one or a few passes through the ground set.
Streaming algorithms for MCC include
those of \citet{Chekuri2015,Feldman2018,Mirzasoleiman2018}.
A streaming algorithm with low adaptive complexity 
has recently been developed by \citet{Kazemi2019}.
In the following, the algorithms are allowed to make
an arbitrary number of passes through the data.

Currently, the best approximation ratio of any algorithm for MCC is
$0.385$ of \citet{Buchbinder2016}. Their algorithm also
works under a more general constraint than cardinality constraint;
namely, a matroid constraint.
This algorithm is the latest in a series of works (e.g. \citep{Naor2011,Ene2016a}) using
the multilinear extension of a submodular function,
which is expensive to evaluate. 

%\citep{Vondrak2008a} $0.309$
%\citep{Gharan2010} $0.325$
%\citep{Naor2011} $1/e - o(1)$
%cts. greedy \citep{Calinescu2011}

% \paragraph{On Derandomization}
% Standard derandomization techniques do not work in the
% context of submodular maximization. 
% % Even for monotone submodular maximization,
% % there is disparity subject to a single matroid constraint,
% % for which the best deterministic ratio is $1/2$ \citep{Fisher1978} and
% % the best randomized ratio is $1 - 1/e$ \citep{Calinescu2007}.
% Recently,
% \citet{Buchbinder2018} came up with a procedure
% to derandomize certain algorithms 
% at the cost of increased query complexity. 
% For example, 
% However, this approach does not work for algorithms
% that use the continuous extension \citep{}; currently,
% the best approximation ratio for a randomized algorithm is
% $X$ \citep{}, while the best ratio for a deterministic
% algorithm is $1/e$. 

\subsection*{Preliminaries} \label{sec:prelim}
Given $n \in \mathbb N$, the notation $[n]$ is used
for the set $\{0,1,\ldots,n-1\}$. In this work,
functions $f$ with domain all subsets of a finite set are considered;
hence, without loss of generality, the domain
of the function $f$ is taken to be $2^{[n]}$,
which is all subsets of $[n]$.
An equivalent characterization of submodularity is that
for each $A, B \subseteq [n]$, $f( A \cup B ) + f( A \cap B ) \le f(A) + f(B)$.
For brevity, the notation $f_x( A )$ is used to denote the marginal gain
$f(A \cup \{ x \}) - f(A)$ of adding element $x$ to set $A$.
%The notation $f|_A$ is used to indicate the function that is $f$ restricted to $A$.

In the following, the problem studied is to maximize a submodular function
under a cardinality constraint (MCC), which is formally defined as follows.
Let $f:2^{n} \to \reals$ be submodular; let $k \in [n]$. Then the problem is to determine
$$ \argmax_{A \subseteq [n] : |A| \le k } f(A). $$
%$A \subseteq [n]$ such that $|A| \le k$ and for all $B$ such that 
%$|B| \le k$, $f(B) \le f(A)$. 
An instance of MCC is the pair $(f,k)$;
however, rather than an explicit description of $f$, the 
function $f$ is accessed by
a value oracle; the value oracle 
may be queried on any set $A \subseteq [n]$ to yield
$f(A)$. The efficiency or runtime of an algorithm is measured by the number of queries made to the oracle for $f$.

Finally, without loss of generality, instances of MCC considered in the following
satisfy $n \ge 4k$. If this condition does not hold, the function may be extended to 
$[m]$ by adding dummy elements to the domain which do not change the function value.
That is, the function $g:2^{m} \to \reals$ is defined as $g( A ) = f( A \cap [n] )$;
it may be easily checked that $g$ remains submodular, and any possible solution to the
MCC instance $(g,k)$ maps\footnote{The mapping is to discard all elements greater than $n$.} 
to a solution of $(f,k)$ of the same value. Hence, 
the ratio of any solution to $(g,k)$
to the optimal is the same as the ratio of the mapped solution to the optimal on $(f,k)$.
\section{Approximation Algorithms} \label{sec:interlace}
In this section, the approximation algorithms based upon interlacing
greedy procedures are presented. In Section \ref{sec:slowInterlace}, the technique
is demonstrated with standard greedy procedures in algorithm InterlaceGreedy.
In Section \ref{sec:fig}, the nearly
linear-time algorithm FastInterlaceGreedy is introduced.
\subsection{The InterlaceGreedy Algorithm} \label{sec:slowInterlace}
In this section, the InterlaceGreedy algorithm (\ig, Alg. \ref{alg:tandem}) is introduced. 
\ig takes as input an instance of MCC and
outputs a set $C$.

\begin{algorithm}
   \caption{\ig$(f,k)$: The InterlaceGreedy Algorithm}
   \label{alg:tandem}
   \begin{algorithmic}[1]
     \STATE {\bfseries Input:} $f : 2^{[n]} \to \reals$, 
     $k \in [n]$
     \STATE {\bfseries Output:} $C \subseteq [n]$, such that $|C| \le k$.
     \STATE $A_0 \gets B_0 \gets \emptyset$
     
     \FOR{$i \gets 0$ to $k - 1$}
     \STATE $a_i \gets \argmax_{x \in [n] \setminus (A_i \cup B_i)} f_x( A_{i} )$ 
     \STATE $A_{i+1} \gets A_i + a_i$
     \STATE $b_i \gets \argmax_{x \in [n] \setminus (A_{i+1} \cup B_i)} f_x( B_{i} )$ 
     \STATE $B_{i+1} \gets B_i + b_i$
     \ENDFOR
     \STATE $D_1 \gets E_1 \gets \{ a_0 \}$
     \FOR{$i \gets 1$ to $k - 1$}
     \STATE $d_i \gets \argmax_{x \in [n] \setminus (D_i \cup E_i)} f_x( D_{i} )$ 
     \STATE $D_{i+1} \gets D_i + d_i$
     \STATE $e_i \gets \argmax_{x \in [n] \setminus (D_{i+1} \cup E_i)} f_x( E_{i} )$ 
     \STATE $E_{i+1} \gets E_i + e_i$
     
     \ENDFOR
   \STATE \textbf{return} $C \gets \argmax \{f(A_i), f(B_i), f(D_i), f(E_i) : i \in [k + 1] \}$
\end{algorithmic}
\end{algorithm}
\ig operates by interlacing two standard greedy procedures. 
This interlacing is accomplished by maintaining two disjoint sets $A$ and $B$, which
are initially empty. For
$k$ iterations, the element $a \not \in B$ with the highest marginal gain 
with respect to $A$ is added to $A$, followed by an analogous greedy selection for $B$;
that is, the element $b \not \in A$ with the highest marginal gain 
with respect to $B$ is added to $B$.
After the first set of interlaced greedy procedures complete, a modified 
version is repeated with sets $D,E$, which are initialized to the maximum-value singleton $\{ a_0 \}$. 
Finally, the algorithm returns the set with the maximum $f$-value of any query the
algorithm has made to $f$.

If $f$ is submodular, InterlaceGreedy has an approximation ratio of $1/4$ and query complexity
$O( kn )$; the deterministic algorithm of \citet{Gupta2010} 
has the same time complexity to achieve ratio $1/6$. The full proof
of Theorem \ref{thm:tg} 
% is similar to the proof of Theorem \ref{thm:ftg} for the faster version of InterlaceGreedy;
% it
is provided in Appendix \ref{apx:proof-thm-tg}. 
\begin{theorem} \label{thm:tg}
  Let $f: 2^{[n]} \to \reals$ be submodular, let $k \in [ n ]$,
  let $O = \argmax_{|S| \le k} f(S)$, and let $C = $ \ig$(f,k)$.
  Then 
  \[ f(C) \ge  f(O) / 4, \]
  and \ig makes $O( kn )$ queries to $f$. 
\end{theorem}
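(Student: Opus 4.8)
The plan is to dispose of the query bound by inspection and then prove the ratio $1/4$ by a greedy-charging argument built around the first interlaced pair $(A_\bullet,B_\bullet)$, using the second pair $(D_\bullet,E_\bullet)$ together with the closing ``best-prefix'' maximization to absorb the genuinely non-monotone cases. For the queries: each of the two loops runs at most $k$ iterations, each iteration evaluates a marginal $f_x(\cdot)$ for every element of the ground set at cost $O(n)$ oracle calls, and the final maximization merely compares $O(k)$ already-computed values, so the total is $O(kn)$. Along the way I would record the structural facts, each following by an easy induction on the loop index from the exclusion sets in the selection lines: $A_i\cap B_i=\emptyset$ holds throughout the first loop (hence $A_k\cap B_k=\emptyset$) and $D_i\cap E_i=\{a_0\}$ throughout the second; the standing hypothesis $n\ge 4k$ keeps every selection domain nonempty, so the algorithm is well defined, and every returned set has size at most $k$.

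For the ratio, let $O=\argmax_{|S|\le k}f(S)$, let $C$ be the output, and write $\delta_i=f(A_{i+1})-f(A_i)$. Two ingredients drive the analysis. \emph{(i)} A telescoping inequality: since $A_{i+1}=A_i+a_i$ with $a_i\notin A_i$, submodularity gives $f(A_{i+1}\cup O)-f(A_i\cup O)\le\delta_i$, so summing from $A_0=\emptyset$ yields $f(A_i\cup O)\ge f(O)-f(A_i)\ge f(O)-f(C)$ for every $i$. \emph{(ii)} A greedy-charging inequality: from $f(A_i\cup O)-f(A_i)\le\sum_{o\in O\setminus A_i}f_o(A_i)$, split $O\setminus A_i$ into (a) elements not yet selected by either procedure, for which $f_o(A_i)\le\delta_i$ by the greedy choice of $a_i$, and (b) elements already placed into $B$, say $o=b_j$ with $j<i$, for which submodularity gives $f_o(A_i)\le f_o(A_j)\le\delta_j$. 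This is the one place the interlacing enters: an optimal element blocked from $A$ must be sitting in $B$, where it was selected at a step whose $A$-gain already dominates its later marginal.

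Summing (ii) over $i\in\{0,\dots,k-1\}$, group (a) contributes at most $k\sum_i\delta_i=k\bigl(f(A_k)-f(\emptyset)\bigr)$ and group (b) at most $(k-1)\sum_i\delta_i$, because a fixed optimal $b_j$ enters the sum for fewer than $k$ indices $i$ and distinct optimal elements of $B$ correspond to distinct steps $j$; hence $\sum_{i=0}^{k-1}\bigl(f(A_i\cup O)-f(A_i)\bigr)\le 2k\,f(A_k)\le 2k\,f(C)$. Dividing by $k$ and substituting (i) gives $f(O)-2f(C)\le 2f(C)$, that is $f(O)\le 4f(C)$. The hard part is that this reasoning is clean only when every greedy gain is nonnegative: a step with $\delta_i<0$, or the greedy selecting an element that also lies in $O$, can break both the telescoping bound in (i) and the per-step estimate $f_o(A_i)\le\delta_i$ in (ii). Discharging these degeneracies is exactly why the algorithm returns the best among \emph{all} prefixes $A_i,B_i,D_i,E_i$ rather than only the final sets, and, I expect, why a second interlaced pass is run from the maximum-value singleton $\{a_0\}$ instead of from $\emptyset$: rerunning (i)--(ii) on $(D_\bullet,E_\bullet)$ relative to $O\cup\{a_0\}$ should recover the factor $1/4$ in precisely the configurations where the first pass degrades. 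Making that bookkeeping airtight, so the constants collapse to exactly $1/4$ and not to a weaker $1/5$ or $1/6$, is the step I would treat most carefully.
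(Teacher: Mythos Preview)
Your step (i) is the wrong direction, and the inequality you state there is actually false. Summing $f(A_{j+1}\cup O)-f(A_j\cup O)\le\delta_j$ from $j=0$ to $i-1$ produces the \emph{upper} bound $f(A_i\cup O)\le f(O)+f(A_i)$, not the lower bound $f(A_i\cup O)\ge f(O)-f(A_i)$ that you write down and then feed into the final averaging. That lower bound is not a typo that can be repaired: it is false for non-monotone $f$. On the paper's own tight instance (Section~3) with $A_1=\{a\}$ and $k$ large, one has $f(A_1\cup O)=\tfrac12+\tfrac1k$ while $f(O)-f(A_1)=1-\tfrac1k$. Without (i) your conclusion $\tfrac1k\sum_i\bigl(f(A_i\cup O)-f(A_i)\bigr)\ge f(O)-2f(C)$ has no support, and (ii) by itself does not reach $f(O)\le 4f(C)$.

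The paper's proof does not average over $i$ at all. It works only with the terminal sets and a carefully chosen ordering: write $O\setminus A_k=\{o_0,\dots,o_{l-1}\}$ so that $o_i\notin B_i$ for every $i$ (possible since $|B_i|=i$ and $l\le k$); then
\[
f(O\cup A_k)-f(A_k)=\sum_{i<l} f_{o_i}(\hat O_i\cup A_k)\le\sum_{i<l} f_{o_i}(A_i)\le\sum_{i<l} f_{a_i}(A_i)=f(A_l),
\]
so $f(O\cup A_k)\le f(A_l)+f(A_k)\le 2\max_i f(A_i)\le 2f(C)$. The same argument for $B$ requires $o_i\notin A_{i+1}$, and since $A_1=\{a_0\}$ this forces $a_0$ to be removed from the list; one only obtains $f\bigl((O\setminus\{a_0\})\cup B_k\bigr)\le 2f(C)$. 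Submodularity on the disjoint pair $A_k,B_k$ then gives $f(O\setminus\{a_0\})\le 4f(C)$. The second interlacing exists to cover the complementary case $a_0\in O$: because $D_k\cap E_k=\{a_0\}$, the same two bounds plus submodularity yield $f(O\cup\{a_0\})\le 4f(C)$. So the purpose of the $(D,E)$ pass is not to absorb negative $\delta_i$---the best-prefix return already handles that via $f(A_l)+f(A_k)\le 2\max_i f(A_i)$---but to repair the $a_0$ asymmetry between the $A$-bound and the $B$-bound.
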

\begin{proof}[Proof sketch]
  The argument of 
  \citet{Fisher1978} shows that
  the greedy algorithm is a $(1/2)$-approximation 
  for monotone submodular maximization with respect to a matroid
  constraint. This argument also applies 
  to non-monotone, submodular functions, but it shows only that
  $f(S) \ge \frac{1}{2} f(O \cup S)$, where $S$ is returned by
  the greedy algorithm. Since $f$ is non-monotone, it is possible
  for $f( O \cup S ) < f(S)$. 
  The main idea of the InterlaceGreedy algorithm 
  is to exploit the fact that 
  if $S$ and $T$ are disjoint, 
  \begin{equation} \label{eq:sm_disjoint}
  f( O \cup S ) + f( O \cup T) \ge f( O ) + f( O \cup S \cup T ) \ge f(O),
\end{equation}
 which
  is a consequence of the submodularity of $f$. Therefore,
  by interlacing two greedy procedures, two disjoint
  sets $A$,$B$ are obtained, which can be shown to almost satisfy 
  $f(A) \ge \frac{1}{2}f( O \cup A )$ and
  $f(B) \ge \frac{1}{2}f( O \cup B )$,
  after which the result follows from (\ref{eq:sm_disjoint}).
  There is a technicality wherein the element $ a_0 $ must
  be handled separately, which requires the second round of
  interlacing to address.
  %  Because of the way the re-ordering works, it is only possible to show that
  % $f(B) \ge f( (O \setminus \{a_0 \}) \cup B ) / 2$, instead of the desired $f(B) \ge f(O \cup B) /2$.
  % Hence, a second greedy interlacing is required, starting both sets from $\{ a_0 \}$, to produce
  % $D,E$ such that $f(D) \ge f( O \cup D ) / 2$ and $f(E) \ge f(O \cup E) / 2$, with 
  % $f( O \cup D ) + f( O \cup E ) \ge f( O \cup \{ a_0 \} )$ by submodularity. Finally,
  % the argument concludes by noticing that either $a_0 \in O$ or $a_0 \not \in O$.
\end{proof}
\subsection{The FastInterlaceGreedy Algorithm} \label{sec:fig}
In this section, a faster interlaced greedy algorithm
(FastInterlaceGreedy (\fig), Alg. \ref{alg:fast-tandem}) 
is formulated, which requires $O(n \log k)$ queries. 
As input, an instance $(f,k)$ of MCC is taken, as well as a parameter $\delta > 0$.

\begin{algorithm}
  \caption{\fig$(f,k, \delta)$: The FastInterlaceGreedy Algorithm}
   \label{alg:fast-tandem}
   \begin{algorithmic}[1]
     \STATE {\bfseries Input:} $f : 2^{[n]} \to \reals$, 
     $k \in [ n ]$
     \STATE {\bfseries Output:} $C \subseteq [n]$, such that $|C| \le k$.
     \STATE $A_0 \gets B_0 \gets \emptyset$
     \STATE $M \gets \tau_A \gets \tau_B \gets \max_{x \in [n]} f(x)$
     \STATE $i \gets -1$, $a_{-1} \gets 0$, $b_{-1} \gets 0$
     \WHILE{$\tau_A \ge \stopGain$ or $\tau_B \ge \stopGain$}
     \STATE $(a_{i+1}, \tau_A) \gets \add (A,B,a_i,\tau_A)$
     \STATE $(b_{i+1}, \tau_B) \gets \add (B,A,b_i,\tau_B)$
     \STATE $i \gets i + 1$
     \ENDWHILE
     \STATE $D_1 \gets E_1 \gets \{ a_0 \}$, $\tau_D \gets \tau_E \gets M$
     \STATE $i \gets 0$, $d_{0} \gets 0$, $e_{0} \gets 0$
     \WHILE{$\tau_D \ge \stopGain$ or $\tau_E \ge \stopGain$}
     \STATE $(d_{i+1}, \tau_D) \gets \add (D,E,d_i,\tau_D)$
     \STATE $(e_{i+1}, \tau_E) \gets \add (E,D,e_i,\tau_E)$
     \STATE $i \gets i + 1$
     \ENDWHILE
   \STATE \textbf{return} $C \gets \argmax \{f(A), f(B), f(D), f(E) \}$
\end{algorithmic}
\end{algorithm}
\begin{algorithm}
\caption{\add$(S,T,j,\tau)$: The \add subroutine}
   \label{alg:add}
   \begin{algorithmic}[1]
     \STATE {\bfseries Input:} Two sets $S,T \subseteq [n]$, element $j \in [n]$, $\tau \in \reals$
     \STATE {\bfseries Output:} $(i, \tau)$, such that $i \in [n]$, $\tau \in \reals$
     \IF{$|S| = k$}
     \STATE \textbf{return} $( 0, (1 - \delta)\tau )$
     \ENDIF
     \WHILE{$\tau \ge \stopGain$}
     \FOR{$(x \gets j; x < n; x \gets x + 1)$}
     \IF{$x \not \in T$}
     \IF{$f_x(S) \ge \tau$}
     \STATE $S \gets S \cup \{ x \}$
     \STATE \textbf{return} $( x, \tau )$
     \ENDIF
     \ENDIF
     \ENDFOR
     \STATE $\tau \gets (1 - \delta) \tau$
     \STATE $j \gets 0$
     \ENDWHILE
   \STATE \textbf{return} $(0 , \tau )$
\end{algorithmic}
\end{algorithm}
The algorithm \fig works as follows. As in InterlaceGreedy, there is
a repeated interlacing of two
greedy procedures. However, to ensure a faster query complexity,
these greedy procedures are thresholded: a separate threshold
$\tau$ is maintained for each of the greedy procedures. The interlacing
is accomplished by alternating calls to the \add subroutine (Alg. \ref{alg:add}),
which adds a single element and is described below. When all of the thresholds fall below the value $\delta M / k$,
the maximum of the greedy solutions is returned; here, $\delta > 0$ is the input parameter,
$M$ is the maximum value of a singleton, and $k \le n$ is the cardinality constraint.

The \add subroutine is responsible for adding a single element above the input threshold
and decreasing the threshold. It takes as input four parameters: two sets $S,T$, element $j$,
and threshold $\tau$; furthermore, \add is given access to the oracle $f$, the budget $k$,
and the parameter $\delta$ of \fig. 
As an overview, \add adds the first\footnote{The first element $x > j$ in the natural ordering on $[n] = \{0, \ldots, n - 1\}$.} element $x \ge j$, such that $x \not \in T$ and
such that the marginal gain $f_x(S)$ is at least $\tau$. If no such element $x \ge j$ exists,
the threshold is decreased by a factor of $(1 - \delta)$ and the process is repeated (with $j$
set to $0$). When such an element $x$ is found, the element $x$ is added to $S$, and 
the new threshold value and position $x$ are returned. Finally, \add ensures that the
size of $S$ does not exceed $k$.

Next, the approximation ratio of \fig is proven.
\begin{theorem} \label{thm:ftg}
  Let $f: 2^{[n]} \to \reals$ be submodular, let $k \in [n]$,
  and let $\epsi > 0$. 
  Let $O = \argmax_{|S| \le k} f(S)$.
  Choose $\delta$ such that $(1 - 6 \delta)/4 > 1/4 - \epsi$,
  and let $C = $ \fig$(f,k,\delta )$.
  Then 
  \[ f(C) \ge ( 1 - 6\delta)f(O) / 4 \ge \left(1/4 - \epsi \right) f(O). \]
\end{theorem}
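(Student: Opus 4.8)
The plan is to follow the combinatorial skeleton of the proof of Theorem~\ref{thm:tg}, replacing each of the four exact greedy procedures by a thresholded greedy and controlling the loss this introduces. The target is a per‑procedure inequality of the form
\[
  2 f(A) \;\ge\; (1-2\delta)\, f(O \cup A) \;-\; 2\delta M ,
\]
together with the analogous bound for $B$ with $O$ replaced by $O \setminus \{a_0\}$ — because, exactly as in the proof of Theorem~\ref{thm:tg}, $a_0$ is committed to $A$ (it is $A_1$) before the first $B$‑step and hence can never be charged on the $B$‑side — and the corresponding pair for $D,E$ in the second round. Since a singleton is a feasible solution, $M = \max_x f(\{x\}) \le f(O)$, so each additive $\delta M$ term is at most $\delta f(O)$; combining the two inequalities for an (essentially) disjoint pair of procedures via the submodular inequality $f(O \cup S) + f(O \cup T) \ge f(O \cup S \cup T) + f\bigl(O \cup (S \cap T)\bigr)$ then produces $4 f(C) \ge (1-6\delta) f(O)$.

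The heart of the proof is this per‑procedure inequality, which I would establish as follows. By submodularity $f(O \cup A) - f(A) \le \sum_{o \in O \setminus A} f_o(A)$, so it suffices to bound the sum by $\tfrac{1}{1-2\delta} f(A) + O(\delta M)$. Split $O \setminus A$ into $O \setminus (A \cup B)$ and $O \cap B$. For the first piece I use that, as long as $|A| < k$, every threshold level $\tau \ge \delta M/k$ on the $A$‑side triggers a full sweep of $[n]$ (the scan pointer $j$ is reset to $0$ after each decrease), so an element never added to $A$ is examined and rejected on every such level; applying this on the level just above the one on which $a_j$ was added yields $f_o(A) < f_{a_j}(A_j)/(1-\delta)$, while elements rejected only on the final level contribute less than $\delta M/(k(1-\delta))$ each, hence $O(\delta M)$ in aggregate; the case $|A| = k$ is covered by the same charging with all $k$ greedy steps present. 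For $o \in O \cap B$ I invoke the interlacing exactly as in the proof of Theorem~\ref{thm:tg}: such an $o$ is available to the $A$‑side until it is committed to $B$, so it can be charged to one of the earliest $A$‑steps, and a system of distinct representatives exists by Hall's theorem because at most $t+1$ of these elements have deadline $\le t$. Summing the charges and using $f(A) = f(\emptyset) + \sum_j f_{a_j}(A_j)$ gives the inequality; the two $(1-\delta)$ factors — one from comparing $f_{a_j}(A_j)$ to a gain rejected at the next‑higher level, one from clearing the denominator — combine to $(1-2\delta)$.

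The combination step then follows the pattern of Theorem~\ref{thm:tg}. If $a_0 \notin O$ then $O \setminus \{a_0\} = O$, and the first round alone gives $\max(f(A),f(B)) \ge (1-6\delta)f(O)/4$ because $A$ and $B$ are disjoint, so $f\bigl(O \cup (A\cap B)\bigr) = f(O)$. If $a_0 \in O$, the second round is used: $D \cap E = \{a_0\}$, so the submodular inequality gives $f(O\cup D)+f(O\cup E) \ge f(O\cup D\cup E)+f(O\cup\{a_0\}) = f(O\cup D\cup E)+f(O) \ge f(O)$; the single element the second round fails to charge (the analogue of $a_0$, forced into $D$ ahead of $E$) is absorbed using $M \le f(O)$ together with the extra room that this element's membership in $O$ creates in the counting, as in the proof of Theorem~\ref{thm:tg}. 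In either case $f(C) = \max\{f(A),f(B),f(D),f(E)\} \ge (1-6\delta)f(O)/4 \ge (1/4 - \epsi)f(O)$ by the choice of $\delta$.

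The main obstacle is the per‑procedure thresholded inequality, and within it the charging argument in the presence of the \add subroutine: unlike the exact greedy of Algorithm~\ref{alg:tandem}, one call to \add may add no element at all (only lowering its threshold) and scans $[n]$ by position with the scan pointer carried across calls, so one must argue carefully that every element is still examined on every threshold level above the stopping value and that the bookkeeping of availability with respect to the interlaced partner set stays consistent with the final sets. Tracking the $\delta$‑losses so that they telescope to exactly $6\delta$, and re‑checking with thresholds that the second round introduces no obstruction worse than the $a_0$ obstruction of round one, are the remaining technical points.
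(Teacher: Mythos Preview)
Your proposal is correct and follows essentially the same approach as the paper: four per-procedure inequalities of the form $f(O\cup A)\le (2+2\delta)f(A)+\delta M$ (your $(1-2\delta)$ form is equivalent up to the same final $(1-6\delta)/4$), combined pairwise via submodularity and the (near-)disjointness of $A,B$ and of $D,E$, with the additive $\delta M$ terms absorbed using $M\le f(O)$ and $f(C)\le f(O)$. The only presentational difference is that where you split $O\setminus A$ into $O\setminus(A\cup B)$ and $O\cap B$ and invoke Hall's theorem for the matching, the paper gives a single explicit ordering---list $(O\setminus A)\cap B$ first in the order those elements were added to $B$, then the remaining elements arbitrarily---and charges $o_i$ to $a_i$ directly; this ordering guarantees $o_i\notin B_i$ by a pigeonhole count, so $o_i$ was necessarily examined and rejected at threshold $\tau_{A_i}/(1-\delta)$, which cleanly disposes of the scan-pointer and availability concerns you flag.
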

\begin{proof}\let\qed\relax
Let $A,B,C,D,E,M$ have their values at termination of $\fig (f,k,\delta )$. Let $A = \{ a_0, \ldots, a_{|A| - 1} \}$ 
be ordered by addition of elements by \fig into $A$. The proof
requires the following four inequalities:
\begin{align}
  f(O \cup A) &\le (2 + 2\delta )f(A) + \delta M, \label{ineq:fast-A} \\
  f( (O \setminus \{ a_0 \}) \cup B ) &\le (2 + 2\delta )f(B) + \delta M, \label{ineq:fast-B} \\
  f( O \cup D ) &\le (2 + 2\delta )f(D) + \delta M, \label{ineq:fast-D}\\
  f( O \cup E ) &\le (2 + 2\delta )f(E) + \delta M. \label{ineq:fast-E}
\end{align}
Once these inequalities have been established, Inequalities \ref{ineq:fast-A},
\ref{ineq:fast-B}, submodularity of $f$, and $A \cap B = \emptyset$ imply 
\begin{equation}
  f( O \setminus \{a_0\} ) \le 2(1 + \delta)(f(A) + f(B)) + 2 \delta M.
\end{equation}
Similarly, from Inequalities \ref{ineq:fast-D}, \ref{ineq:fast-E},
submodularity of $f$, and $D \cap E = \{ a_0 \}$, it holds that
\begin{equation}
  f( O \cup \{ a_0 \} ) \le 2(1 + \delta)(f(D) + f(E)) + 2 \delta M.
\end{equation}
Hence, from the fact that either $a_0 \in O$ or $a_0 \not \in O$
and the definition of $C$, it holds that
\begin{equation*}
  f(O) \le 4(1+\delta )f(C) + 2 \delta M.
\end{equation*}
Since $f(C) \le f(O)$ and $M \le f(O)$, the theorem is proved.

The proofs of Inequalities \ref{ineq:fast-A}--\ref{ineq:fast-E} 
are similar. The proof of Inequality \ref{ineq:fast-B} is given here,
while the proofs of the others are provided in Appendix \ref{apx:proof-thm-ftg}.
\begin{proof}[Proof of Inequality \ref{ineq:fast-B}] 
Let $A = \{a_0, \ldots, a_{|A| - 1} \}$ be ordered as specified
by \fig. Likewise, let
$B = \{b_0, \ldots, b_{|B| - 1} \}$ be ordered as specified
by \fig.
\begin{lemma} \label{lemma:fast-order-B}
  $O \setminus ( B \cup \{a_0\} ) = \{o_0,\ldots,o_{l-1} \}$ can be ordered such that
  \begin{equation} \label{ineq:fast-marg-B}
    f_{o_i}(B_i) \le (1 + 2 \delta) f_{b_i}(B_i),
  \end{equation}
  for any $i \in [|B|]$.
\end{lemma}
\begin{proof}
  For each $i \in [|B|]$, define $\tau_{B_i}$ to be the value of $\tau$ when $b_i$
  was added into $B$ by the \add subroutine.
  Order $o \in (O \setminus (B \cup \{ a_0 \} )) \cap A = \{o_0, \ldots, o_{\ell - 1} \} $ by
  the order in which these elements were added into $A$. Order the remaining
  elements of $O \setminus (B \cup \{ a_0 \})$ arbitrarily. Then, when 
  $b_i$ w;as chosen by \add, it holds that $o_i \not \in A_{i + 1}$, since
  $A_1 = \{ a_0 \}$ and $a_0 \not \in O \setminus (B \cup \{ a_0 \})$.
  Also, it holds that $o_i \not \in B_i$ since $B_i \subseteq B$;
  hence $o_i$ was not added into some (possibly non-proper) subset $B'_i$ of $B_i$ 
  at the previous threshold value $\frac{\tau_{B_i}}{ (1 - \delta)}$.
  By submodularity,
  $f_{o_i}(B_i) \le f_{o_i}(B'_i) < \frac{\tau_{B_i}}{(1 - \delta)}$.
  Since $f_{b_i}(B_i) \ge \tau_{B_i}$ and $\delta < 1/2$, 
  inequality (\ref{ineq:fast-marg-B}) follows.
\end{proof}
  Order $\hat{O} = O\setminus (B\cup \{a_0\}) = \{o_0, \ldots, o_{l-1} \}$ as 
  defined in the proof of Lemma \ref{lemma:fast-order-B},
  and let $\hat{O}_i = \{o_0, \ldots, o_{i-1} \}$, if $i \ge 1$, and
  let $\hat{O}_0 = \emptyset$.
  Then 
  \begin{align*}
    f( \hat{O}  \cup B ) - f( B ) &= \sum_{i=0}^{l-1} f_{o_i} ( \hat{O}_i \cup B )\\
    &= \sum_{i=0}^{|B| - 1} f_{o_i} ( \hat{O}_i \cup B ) + \sum_{i=|B|}^{l - 1} f_{o_i} ( \hat{O}_i \cup B ) \\
    &\le \sum_{i=0}^{|B| - 1} f_{o_i} ( B_i ) + \sum_{i=|B|}^{l - 1} f_{o_i} (B  ) \\
    &\le \sum_{i=0}^{|B| - 1} (1 + 2\delta ) f_{b_i} ( B_i ) + \sum_{i=|B|}^{l - 1} f_{o_i} ( B ) \\
    &\le ( 1 + 2 \delta ) f(B) + \delta M,
  \end{align*}
  where any empty sum is defined to be 0; the first inequality follows by submodularity, the second follows from Lemma \ref{lemma:fast-order-B}, and the third follows from the definition of $B$, 
  and the facts that, for any $i$ such that $|B| \le i < l$, $\max_{x \in [n] \setminus A_{|B|+1}} f_x(B) < \stopGain $, $l - |B| \le k$, and $o_i \not \in A_{|B| + 1}$.
\end{proof}\end{proof}
\begin{theorem} \label{thm:ftg-speed}
  Let $f: 2^{[n]} \to \reals$ be submodular, let $k \in [n ]$,
  and let $\delta > 0$.
  Then the number of queries to $f$ by $\fig(f,k,\delta)$ is
  at most $O\left( \frac{n}{\delta} \log \frac{k}{\delta} \right)$.
\end{theorem}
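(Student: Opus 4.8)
The plan is to bound separately the cost of the two interlaced phases; since the two phases have the same structure (the second merely starts from the singleton $\{a_0\}$), it suffices to bound one of them, say the first \texttt{WHILE} loop that builds $A$ and $B$. The outer loop makes alternating calls to \add, so the total query count is the sum over all \add calls of the number of oracle queries each call performs. Each \add call runs an inner \texttt{WHILE} loop that, for the current threshold $\tau$, scans elements $x$ from the starting index $j$ up to $n-1$, querying $f_x(S)$ at most once per element not in $T$; if nothing qualifies, $\tau$ is multiplied by $(1-\delta)$ and the scan restarts from $0$. So a single \add call costs $O(n)$ queries per threshold value it considers, and the key is to bound the total number of (threshold value, \add call) pairs across the whole phase.

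The crucial accounting observation is that the threshold $\tau_A$ is \emph{monotonically nonincreasing across the entire phase}: \add receives $\tau$, only ever decreases it, and returns the decreased value, which is passed into the next \add call for $S=A$. It starts at $M$ and the phase halts once $\tau_A < \delta M/k$, so $\tau_A$ takes at most $\log_{1/(1-\delta)}(k/\delta) = O\!\left(\frac{1}{\delta}\log\frac{k}{\delta}\right)$ distinct values over the whole phase (using $\log(1/(1-\delta)) \ge \delta$). The same bound holds for $\tau_B$. Now charge queries as follows. Across a fixed threshold value $\tau$, every time \add scans for $A$ at that threshold it either (i) finds and adds an element, or (ii) fails and decrements $\tau$ — and case (ii) happens at most once per threshold value, because after decrementing, the next \add call for $A$ operates at the strictly smaller value $(1-\delta)\tau$. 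Moreover, within a fixed threshold value $\tau$, successive \add calls resume the scan from the index $j$ returned by the previous call rather than from $0$, so the total scanning work at a fixed threshold value is $O(n)$ in aggregate (each index in $[n]$ is examined $O(1)$ times before the threshold drops), \emph{plus} the one-time $O(n)$ failing scan. Hence each of the $O\!\left(\frac{1}{\delta}\log\frac{k}{\delta}\right)$ threshold values contributes $O(n)$ queries to the $A$-side and $O(n)$ to the $B$-side, giving $O\!\left(\frac{n}{\delta}\log\frac{k}{\delta}\right)$ for the first phase; the second phase is identical, and summing gives the claimed total. One also has to note the early-exit branch $|S|=k$ in \add, which costs $O(1)$ and only decreases $\tau$, so it is absorbed into the same threshold-count bound.

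The main obstacle is making the "resume from $j$" bookkeeping rigorous — i.e., arguing that at a fixed threshold value the scan index only moves forward, so the union of all partial scans at that threshold covers each element of $[n]$ at most a constant number of times rather than once per \add call. One must check that when \add adds element $x$ and returns $(x,\tau)$ with $\tau$ unchanged, the subsequent call correctly starts at $x$ (or $x+1$) and that no element below the current index can later become relevant at the same threshold (it cannot, by the order of the scan, and any element skipped because it was in $T$ stays in $T$). Handling the interlacing — the fact that $A$-scans and $B$-scans are interleaved but their thresholds evolve independently — just means running the argument twice in parallel, once per side; this is notationally fiddly but not deep. Once the per-threshold $O(n)$ aggregate bound is in place, the factor $O\!\left(\frac{1}{\delta}\log\frac{k}{\delta}\right)$ from the number of thresholds finishes the proof.
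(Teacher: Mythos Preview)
Your proposal is correct and follows essentially the same approach as the paper's proof: bound the number of distinct threshold values per set by $O\!\left(\frac{1}{\delta}\log\frac{k}{\delta}\right)$, and argue that at each fixed threshold the scans collectively touch each element of $[n]$ only $O(1)$ times because successive \add calls resume from the previously returned index. The paper compresses the resume-from-$j$ bookkeeping into a single sentence (``if the value of $\tau$ is the same as when $s_{i-1}$ was added, then $s_i > s_{i-1}$''), whereas you spell out the interval-covering picture and the handling of the failing scan and the $|S|=k$ early exit, but the substance is identical.
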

\begin{proof}
  Recall $[n] = \{0, 1, \ldots, n - 1\}$. Let $S \in \{ A,B,D,E \}$, and 
  $S = \{s_0, \ldots, s_{|S| - 1} \}$ in the order in which elements were
  added to $S$.
  When \add is called by \fig to add an element $s_i \in [n]$ to $S$,
  if the value of $\tau$ is the same as the value when $s_{i-1}$ was added to $S$, then 
  $s_i > s_{i-1}$. Finally, once \add queries the marginal gain of adding $(n - 1)$,
  the threshold is revised downward by a factor of $(1 - \delta)$. 
  
  Therefore, there are at most $O(n)$ queries of $f$ at each distinct value
  of $\tau_A$, $\tau_B$, $\tau_D$, $\tau_E$.  Since at
  most $O(\frac{1}{\delta} \log \frac{k}{\delta})$ values are assumed by each of these
  thresholds, the theorem follows. 
\end{proof}
\section{Tight Examples}
In this section, examples are provided 
showing that InterlaceGreedy or FastInterlaceGreedy may achieve performance ratio at most
$1/4 + \epsi$ on specific instances, for each $\epsi > 0$. These examples show that the
analysis in the preceding sections is tight. 

Let $\epsi > 0$ and choose $k$ such that $1/k < \epsi$. Let $O$ and $D$
be disjoint sets each of $k$ distinct elements; and let $U = O \dot{\cup} \{a,b\} \dot{\cup} D$.
A submodular function $f$ will be defined on subsets of $U$ as follows.

Let $C \subseteq U$.
\begin{itemize}
  \item If both $a \in C$ and $b \in C$, then $f(C) = 0$.
  \item If $a \in C$ xor $b \in C$, then $f(C) = \frac{|C \cap O|}{2k} + \frac{1}{k}$.
  \item If $a \not \in C$ and $b \not \in C$, then $f(C) = \frac{|C \cap O|}{k}$.
\end{itemize}

The following proposition is proved in Appendix \ref{apx:tight}.
\begin{proposition}\label{prop:sm}
  The function $f$ is submodular.
\end{proposition}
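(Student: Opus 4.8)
The plan is to verify submodularity directly from the characterization $f(C \cup \{x\}) - f(C) \le f(C' \cup \{x\}) - f(C')$ for all $C' \subseteq C$ and $x \notin C$, by case analysis on how $x$ and the elements $a,b$ interact with the three-piece definition of $f$. First I would record the marginal gains $f_x(C)$ in each regime. If $x \in D$, then adding $x$ never changes $|C \cap O|$ nor the membership of $a,b$, so $f_x(C) = 0$ always, and the inequality is trivial. If $x \in O$, then adding $x$ leaves the $a,b$-status of $C$ unchanged and increases $|C \cap O|$ by one; hence $f_x(C) = \frac{1}{2k}$ when exactly one of $a,b$ is in $C$, $f_x(C) = \frac{1}{k}$ when neither is in $C$, and $f_x(C) = 0$ when both are in $C$. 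The key monotonicity observation is that as $C$ grows, it can only acquire more of $\{a,b\}$, so the ``coefficient'' on $|C\cap O|$ is nonincreasing in $C$; this gives $f_x(C) \le f_x(C')$ for $x \in O$ directly. The only genuinely interesting cases are $x = a$ and $x = b$, which are symmetric, so I would treat $x = a$.

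For $x = a$ with $a \notin C$ and $C' \subseteq C$, I would split on whether $b \in C$. If $b \in C$ (hence possibly $b \in C'$ or not): adding $a$ to $C$ produces both $a,b \in C \cup \{a\}$, so $f(C\cup\{a\}) = 0$ and $f_a(C) = -f(C) \le 0$, since $f$ is nonnegative. Meanwhile $f_a(C') \ge -f(C')$ trivially, but I need $f_a(C) \le f_a(C')$, i.e. $-f(C) \le f_a(C')$. If $b \in C'$ too, then $f_a(C') = -f(C') $ and I need $f(C') \le f(C)$, which holds because both have the ``one of $a,b$'' form with $|C' \cap O| \le |C \cap O|$ — wait, more carefully, $C$ has $b$ but not $a$ so $f(C) = \frac{|C\cap O|}{2k} + \frac 1k \ge \frac 1k > 0 \ge -f(C')\cdot(-1)$... the cleanest route is just $f_a(C) = -f(C) \le 0 \le$ any value that $f_a(C')$ could take when it is nonnegative, and when $f_a(C')$ is itself negative (the sub-case $b\in C'$) one checks $f(C')\le f(C)$ using $|C'\cap O|\le|C\cap O|$ and that both are in the xor-regime. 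If $b \notin C$ (so $b \notin C'$): adding $a$ moves $C$ from the ``neither'' regime to the ``xor'' regime, giving $f_a(C) = \frac{|C\cap O|}{2k} + \frac 1k - \frac{|C\cap O|}{k} = \frac 1k - \frac{|C\cap O|}{2k}$, and similarly $f_a(C') = \frac 1k - \frac{|C'\cap O|}{2k}$; since $|C'\cap O| \le |C\cap O|$ we get $f_a(C) \le f_a(C')$.

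The main obstacle is bookkeeping rather than depth: making sure the sub-case where both $C$ and $C'$ already contain $b$ (so that adding $a$ yields value $0$ from both a larger and a smaller set) is handled, since there $f_a$ is negative and I must compare two negative marginals and invoke $f(C') \le f(C)$ in the xor-regime. I would state once, as a preliminary, that on sets in the same one of the three regimes $f$ is monotone in $|C \cap O|$ and hence in $\subseteq$, and that across regimes the ordering ``neither'' $\ge$ ``xor'' (for fixed $|C\cap O|$, value $\frac{|C\cap O|}{k}$ vs $\frac{|C\cap O|}{2k}+\frac 1k$, which can go either way actually) — so I would not assert a global monotonicity but only the per-regime one and handle the regime transitions by the explicit arithmetic above. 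Assembling these cases, each of which reduces to $|C'\cap O| \le |C\cap O|$ together with nonnegativity of $f$, completes the verification that $f$ is submodular.
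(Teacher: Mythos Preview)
Your approach via the diminishing-returns characterization is valid and differs from the paper's, which instead verifies the lattice inequality $f(S)+f(T)\ge f(S\cup T)+f(S\cap T)$ by case analysis on the positions of $a,b$ relative to $S,T$. Your decomposition by the type of the added element $x$ is arguably cleaner here: elements of $D$ and of $O$ are dispensed with in one line each, leaving only the symmetric cases $x=a$ and $x=b$, whereas the paper must enumerate seven configurations (plus symmetries).

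There is, however, a small gap in your treatment of $x=a$, sub-case $b\in C$. You assert that the only way $f_a(C')$ can be negative is the sub-sub-case $b\in C'$; that is false. When $b\notin C'$ you computed (later, in the other branch) $f_a(C')=\tfrac{1}{k}-\tfrac{|C'\cap O|}{2k}$, and this is negative whenever $|C'\cap O|>2$. So the shortcut ``$f_a(C)\le 0\le f_a(C')$'' does not cover this sub-case. The repair is immediate and uses exactly the ingredient you already identified: with $b\in C$ and $b\notin C'$,
\[
f_a(C)=-\frac{|C\cap O|}{2k}-\frac{1}{k}
\quad\text{and}\quad
f_a(C')=\frac{1}{k}-\frac{|C'\cap O|}{2k},
\]
so $f_a(C)\le f_a(C')$ follows from $|C'\cap O|\le|C\cap O|$ together with $-\tfrac{1}{k}\le\tfrac{1}{k}$. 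With this correction your plan goes through.
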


Next, observe that for any $o \in O$, 
$f_{a}( \emptyset ) = f_b( \emptyset) = f_o( \emptyset ) = 1/k$. Hence InterlaceGreedy or FastInterlaceGreedy may choose $a_0 = a$ and $b_0 = b$; after this choice,
the only way to increase $f$ is by choosing elements of $O$. Hence $a_i, b_i$ will be chosen in $O$
until elements of $O$ are exhausted, which results in $k/2$ elements of $O$ added
to each of $A$ and $B$.  Thereafter, elements of $D$ will be chosen, which do not
affect the function value. This yields 
$$f(A) = f(B) \le 1/k + 1/4.$$
Next, $D_1 = E_1 = \{ a \}$, and a similar situation arises, in which 
$k/2$ elements of $O$ are added to $D,E$, yielding $f(D) = f(E) = f(A)$. 
Hence InterlaceGreedy or FastInterlaceGreedy may return $A$,
while $f( O ) = 1$. So 
$\frac{f(A)}{f(O)} \le 1/k + 1/4 \le 1/4 + \epsi$. 
\section{Experimental Evaluation} \label{sec:exp}
In this section, performance of FastInterlaceGreedy (\fig) 
is compared with that of state-of-the-art algorithms on
two applications of
submodular maximization: cardinality-constrained maximum cut and network monitoring. 
\subsection{Setup}
\paragraph{Algorithms}
The following algorithms are compared.
Source code for the evaluated implementations of all algorithms is available at \url{https://gitlab.com/kuhnle/non-monotone-max-cardinality}.
\begin{itemize}
\item \textbf{ FastInterlaceGreedy (Alg. \ref{alg:fast-tandem})}: \fig is implemented as specified
    in the pseudocode, with the following addition: a stealing procedure
    is employed at the end, which uses submodularity to quickly steal\footnote{Details of the stealing procedure are given in Appendix \ref{apx:steal}.}
    elements from $A,B,D,E$ into $C$ in $O(k)$ queries. This does
    not impact the performance guarantee, as the value of $C$ can only
    increase. The parameter $\delta$ is set to $0.1$, yielding
    approximation ratio of $0.1$.
    \item \textbf{\citet{Gupta2010}}: The algorithm of \citet{Gupta2010} for cardinality
    constraint; as the subroutine for the unconstrained maximization subproblems,
    the deterministic, linear-time $1/3$-approximation algorithm of
    \citet{Naor2012} is employed. This yields an overall approximation ratio
    of $1/7$ for the implementation used herein. This algorithm is the fastest
    determistic approximation algorithm in prior literature.
    \item \textbf{FastRandomGreedy (FRG}): The $O \left( \frac{n}{\epsi^2} \ln \frac{1}{\epsi} \right)$ 
    randomized algorithm of \citet{Buchbinder2015a} (Alg. 4 of that paper), with
    expected ratio $1/e - \epsi$; the parameter $\epsi$ was set to 0.3, yielding
    expected ratio of $\approx 0.07$ as evaluated herein.
    This algorithm is the fastest
    randomized approximation algorithm in prior literature.
    \item \textbf{BLITS}: The $O \left( \log^2 n \right)$-adaptive algorithm
    recently introduced in \citet{Balkanskia}; the algorithm is employed
    as a heuristic without performance ratio, with the same parameter choices as in \citet{Balkanskia}. 
    In particular, $\epsi = 0.3$
    and 30 samples are used to approximate the expections. Also, a bound on OPT
    is guessed in logarithmically many iterations as described in \citet{Balkanskia}
    and references therein.
\end{itemize}
Results for randomized algorithms are the mean of 10 trials, and the standard deviation
is represented in plots by a shaded region.
\paragraph{Applications}
Many applications with non-monotone, submodular objective functions exist. In this
section, two applications are chosen to demonstrate the performance of the evaluated
algorithms.
\begin{itemize}
  \item Cardinality-Constrained Maximum Cut: The archetype of a submodular, non-monotone
    function 
    is the maximum cut objective: given graph $G = (V,E)$, $S \subseteq V$, $f(S)$ is
    defined to be the number of edges crossing from $S$ to $V \setminus S$. The cardinality constrained version of this problem is considered in the evaluation.
  \item Social Network Monitoring: Given an online social 
    network, suppose it is desired to choose $k$ users to monitor,
    such that the maximum amount of content is propagated 
    through these users. 
    Suppose the amount of content propagated between two users $u,v$ is encoded as
    weight $w(u,v)$. Then
    $ f(S) = \sum_{u \in S, v \not \in S} w(u,v).$
\end{itemize}
\subsection{Results} In this section, results are presented for the algorithms on the two 
applications. In overview: in terms of objective value, \fig and \citet{Gupta2010} were
about the same and outperformed BLITS and FRG. Meanwhile, \fig was the fastest algorithm
by the metric of queries to the objective and was faster than \citet{Gupta2010} by at least
an order of magnitude.
\paragraph{Cardinality Constrained MaxCut}
\begin{figure}[h]
  \subfigure[ER, Cut Value]{ \label{er:obj}
    \includegraphics[width=0.31\textwidth,height=0.15\textheight]{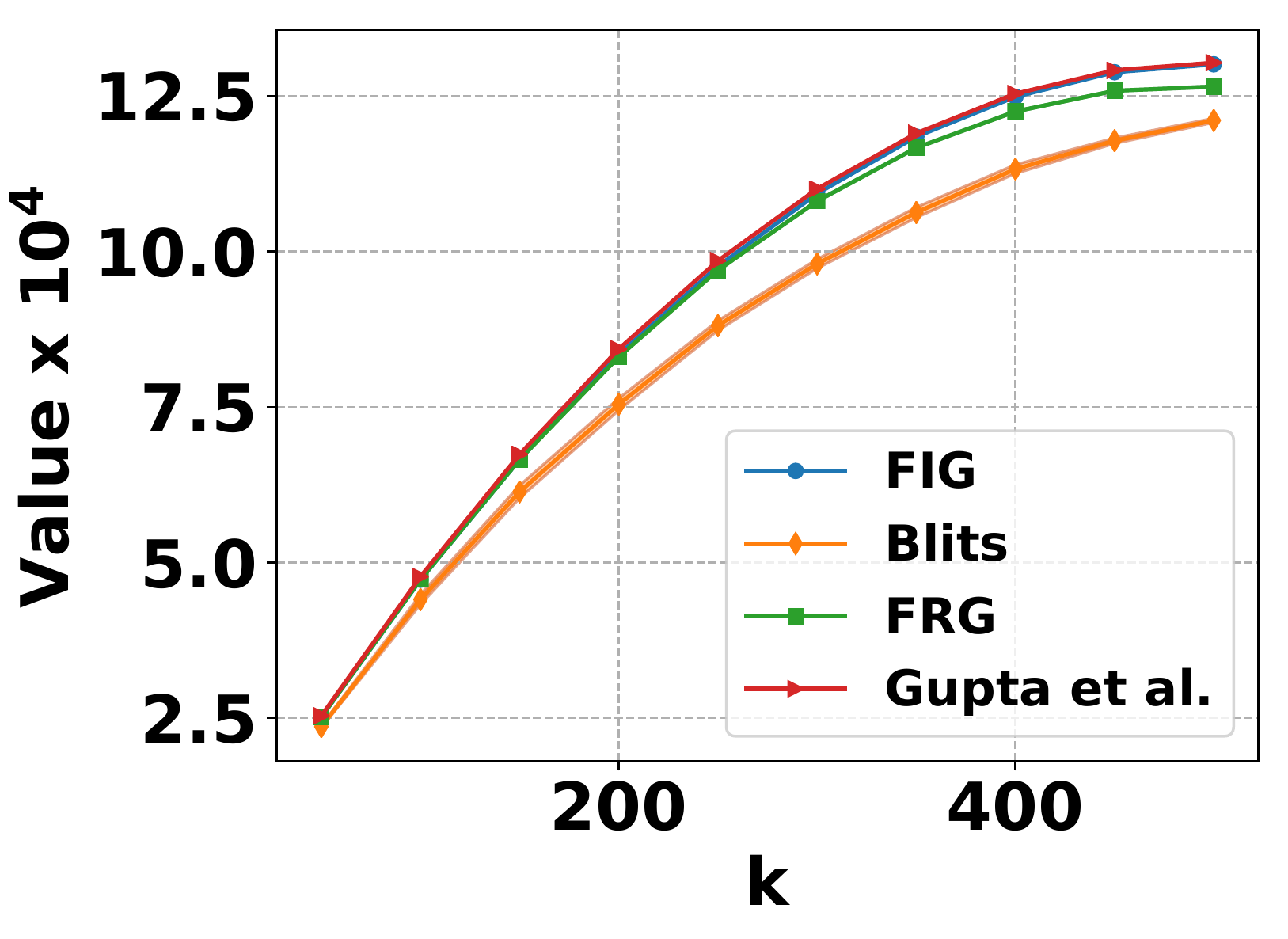}
  }
  \subfigure[ER, Function Queries]{ \label{er:que}
    \includegraphics[width=0.31\textwidth,height=0.15\textheight]{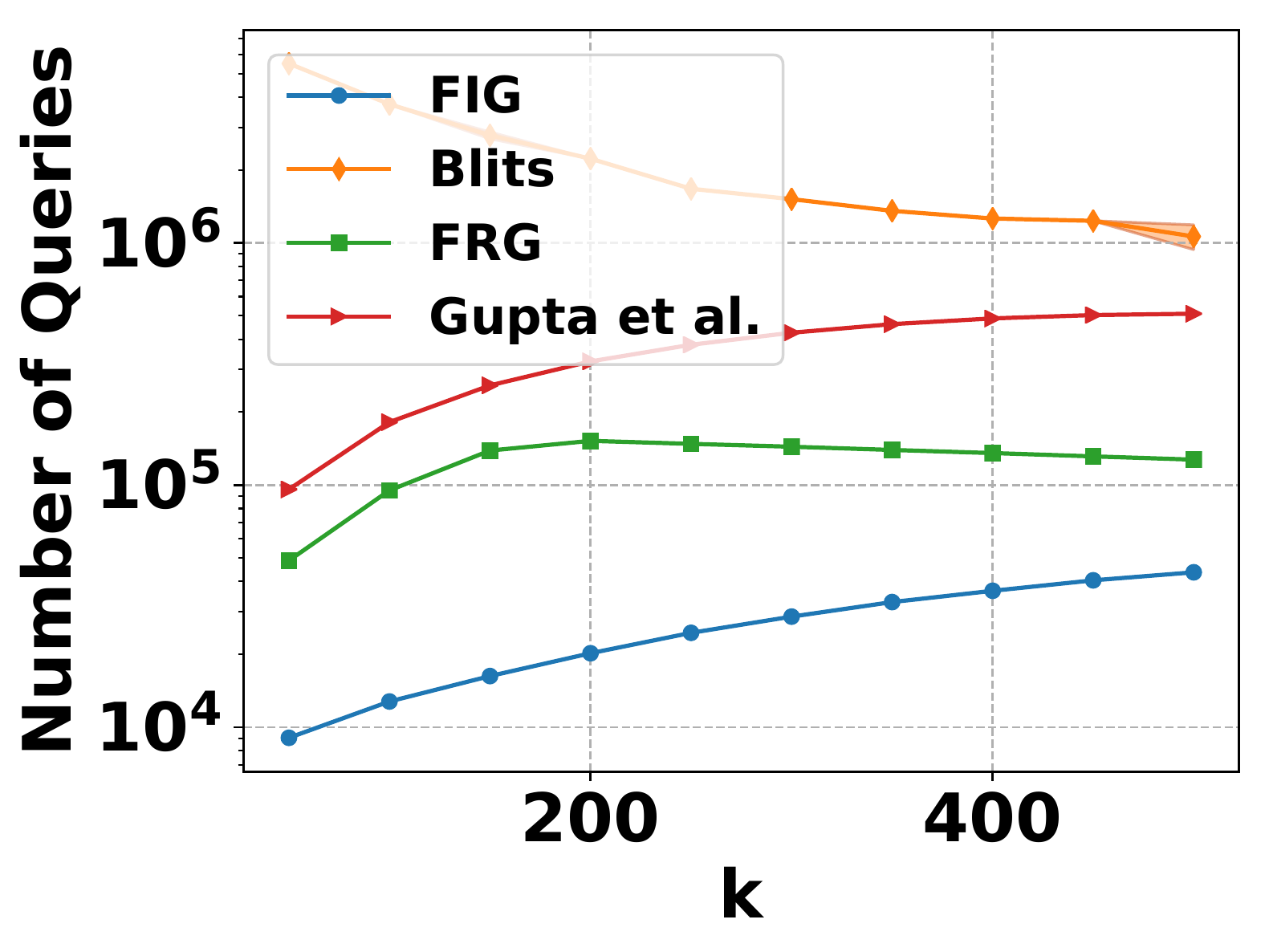}
  }
  \subfigure[BA, Cut Value]{ \label{ba:obj}
    \includegraphics[width=0.31\textwidth,height=0.15\textheight]{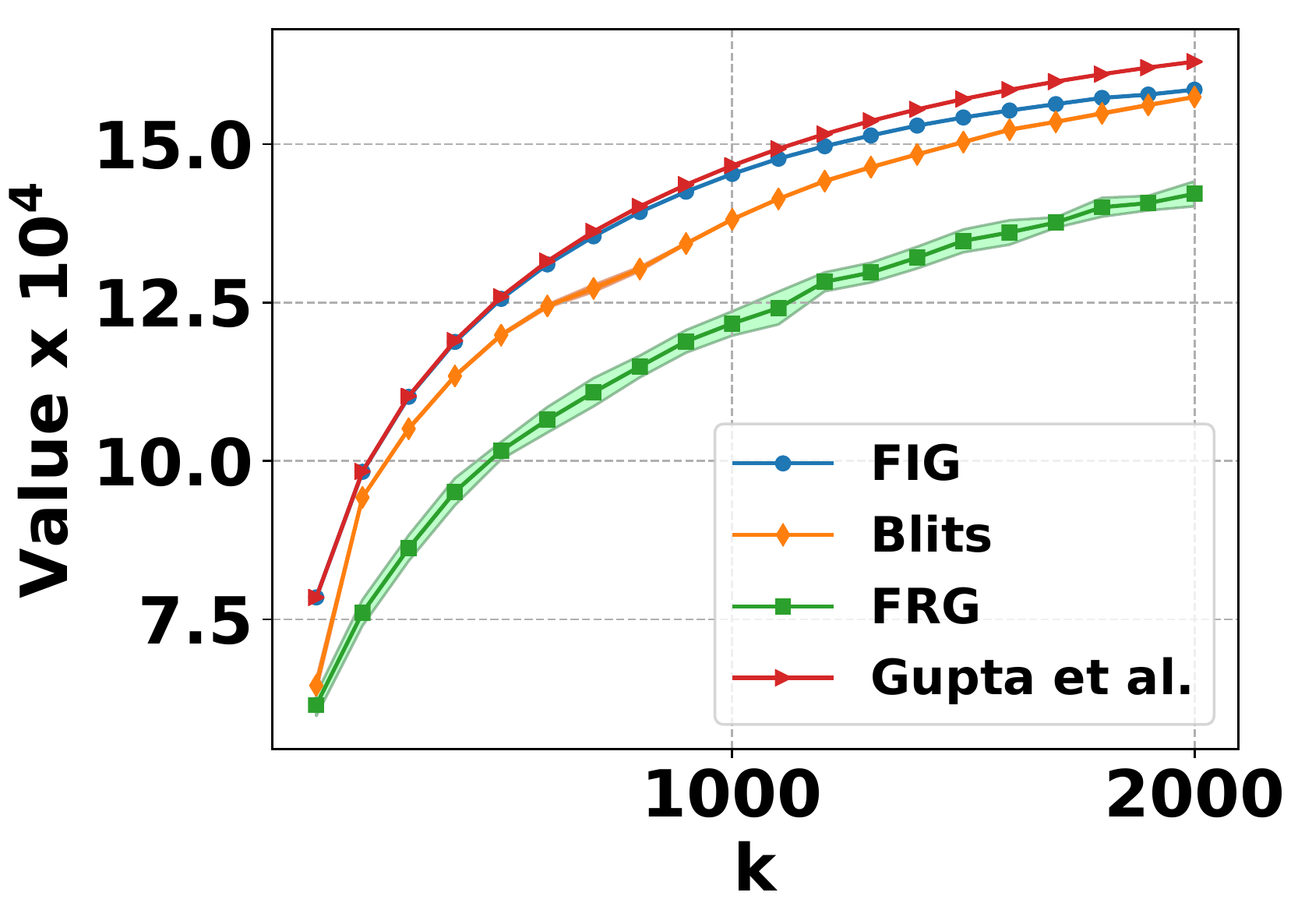}
  }

  \subfigure[BA, Function Queries]{ \label{ba:que}
    \includegraphics[width=0.31\textwidth,height=0.15\textheight]{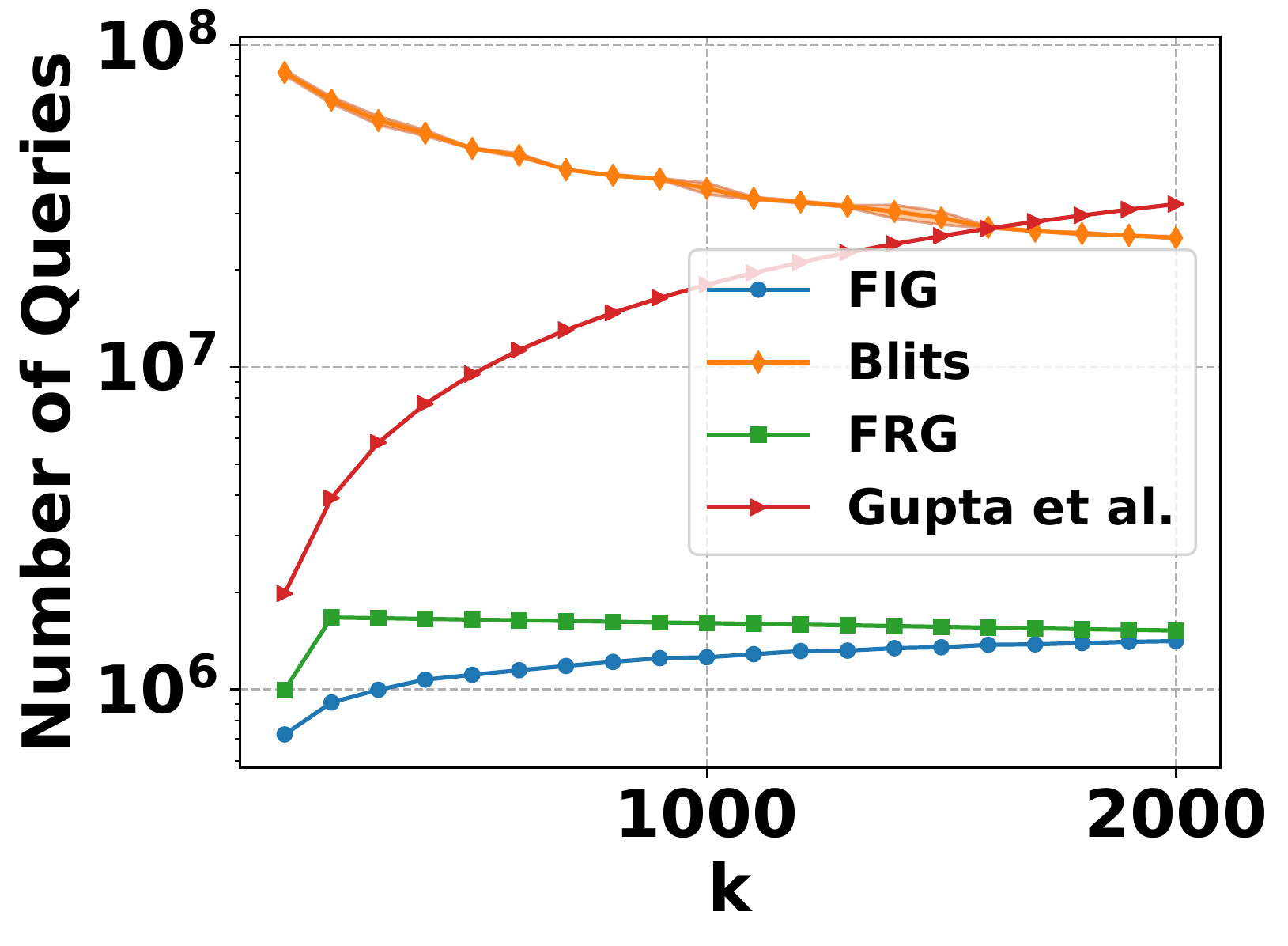}
  }
  \subfigure[Total content monitored versus budget $k$]{
    \includegraphics[width=0.31\textwidth,height=0.15\textheight]{./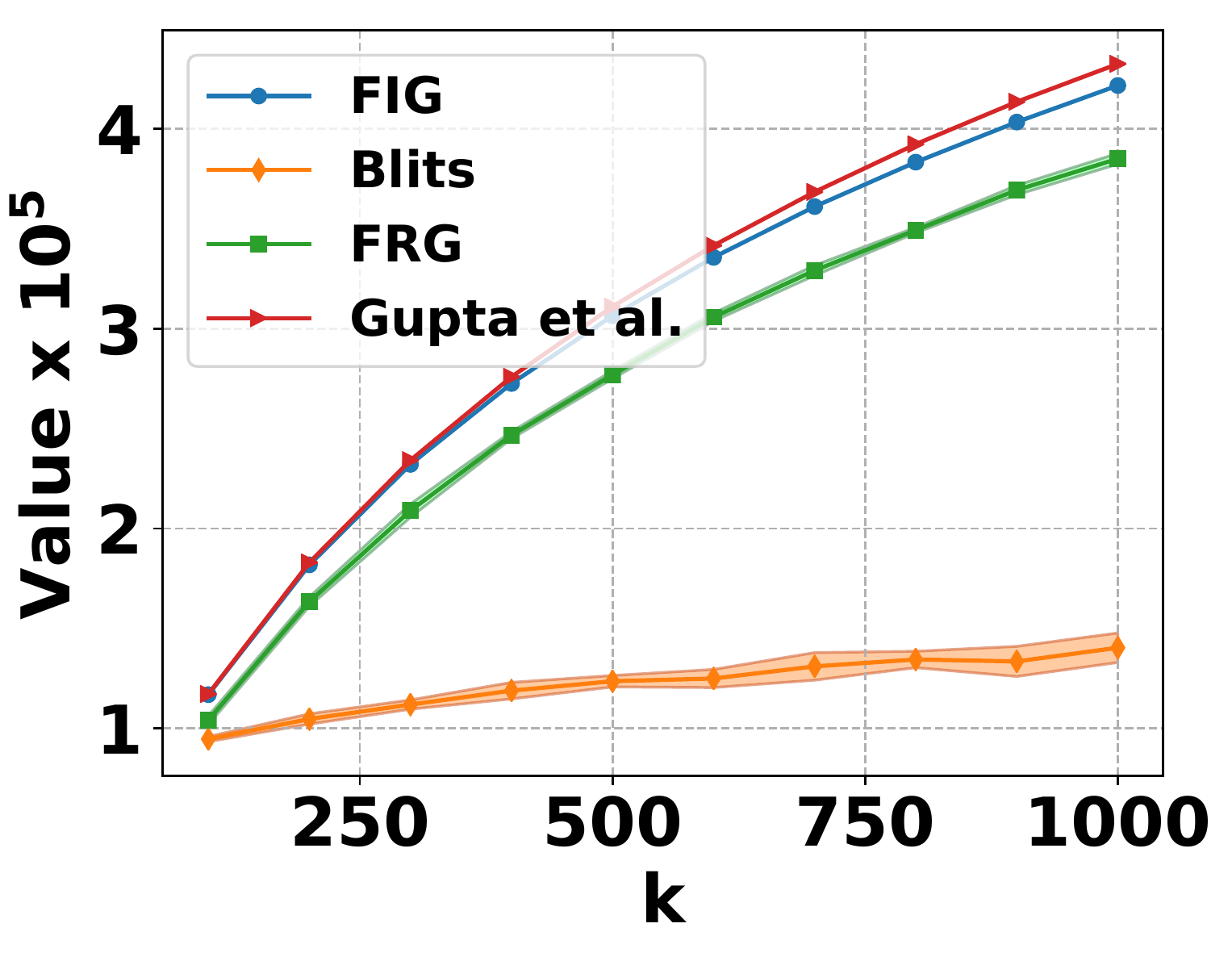}
  }
  \subfigure[Number of Queries versus budget $k$]{
    \includegraphics[width=0.31\textwidth,height=0.15\textheight]{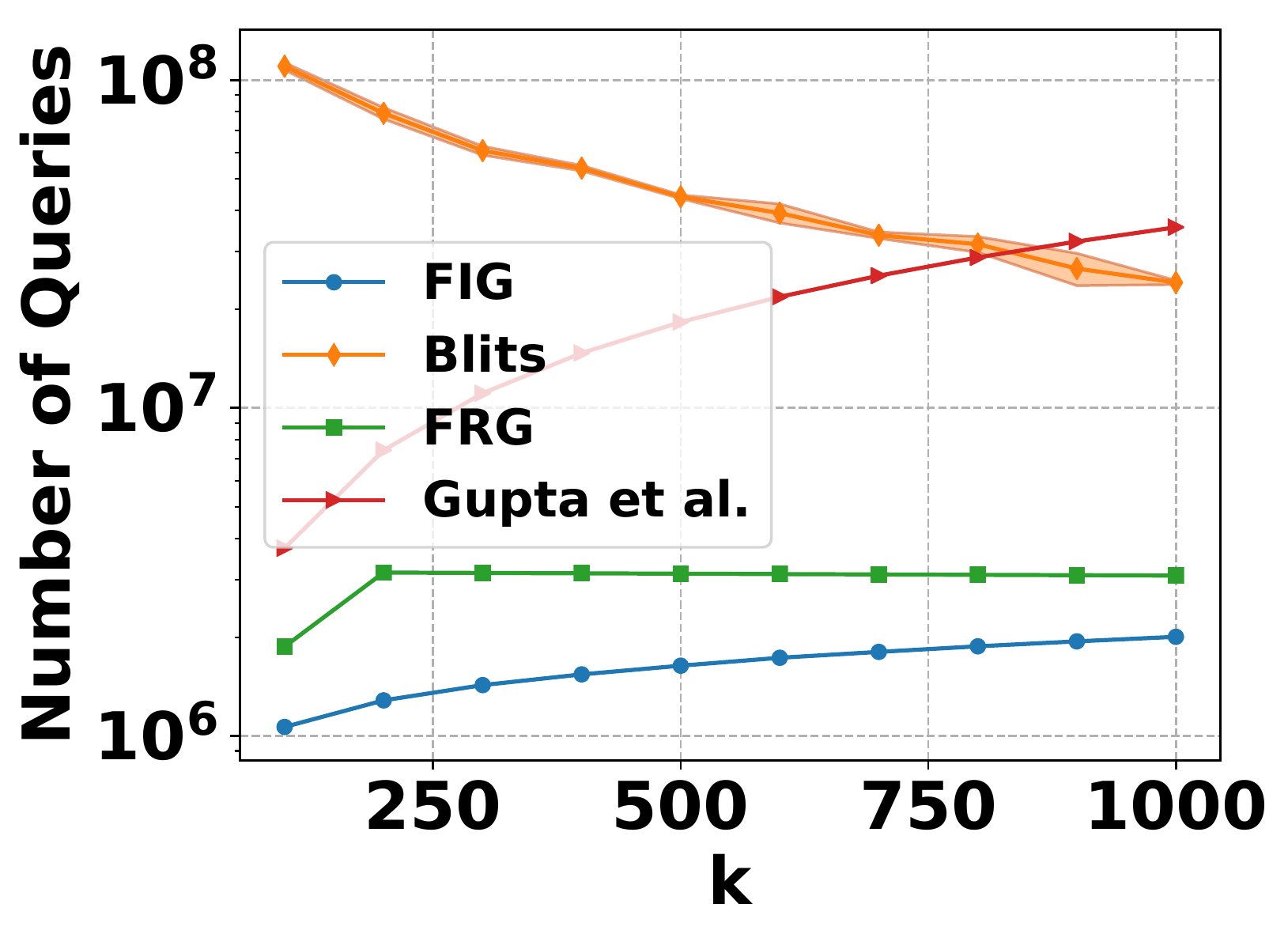}
  }
  \caption{\textbf{(a)--(d)}: Objective value and runtime for cardinality-constrained maxcut on random graphs. \textbf{(e)--(f)}:
Objective value and runtime for cardinality-constrained maxcut on ca-AstroPh with simulated amounts of content between users. In all plots, the $x$-axis shows the budget $k$. }
\end{figure}
%\begin{figure}[h]
  % \subfigure[]{
  %   \includegraphics[width=0.22\textwidth,height=0.1\textheight]{./fig/randomGraphs/obj-er}
  % }
  % \subfigure[]{
  %   \includegraphics[width=0.22\textwidth,height=0.1\textheight]{./fig/randomGraphs/que-er}
  % }
  
% \caption{Objective value and runtime for cardinality-constrained maxcut on ca-AstroPh with simulated amounts of content between users.}
%\end{figure}
For these experiments, two random graph models were employed: an Erdős-Rényi (ER) random
graph with $1,000$ nodes and edge probability $p = 1/2$, and a Barabási–Albert (BA) graph 
with $n=10,000$ and $m=m_0=100$.

On the ER graph, results are shown in Figs. \ref{er:obj} and \ref{er:que};
the results on the BA graph are shown in Figs. \ref{ba:obj} and \ref{ba:que}. 
In terms
of cut value, the algorithm of \citet{Gupta2010} performed the best, although the
value produced by \fig was nearly the same. On the ER graph, the next best was
FRG followed by BLITS; whereas on the BA graph, BLITS outperformed FRG in cut value.  
In terms of efficiency of queries, \fig used the smallest number on every evaluated
instance, although the number did increase logarithmically with budget. The number of
queries used by FRG was higher, but after a certain budget remained constant.
The next most efficient was \citet{Gupta2010} followed by BLITS. 
\paragraph{Social Network Monitoring}
For the social network monitoring application, the citation network ca-AstroPh from 
the SNAP dataset collection was used, with $n = 18,772$ users and $198,110$ edges.
Edge weights, which represent the amount of content shared between users, 
were generated uniformly randomly in $[1,10]$. 
The results were similar qualitatively to those for the 
unweighted MaxCut problem presented previously. \fig 
is the most efficient in terms of number of queries, and
\fig is only outperformed in solution quality by \citet{Gupta2010},
which required more than an order of magnitude more queries.
% BLITS performed notably worse than on the random graphs,
% which is likely due to the inaccuracy of using only 30 samples to evaluate expectations:
% as it already required days to complete and is being run as a heuristic, the number
% of samples were not increased.

\paragraph{Effect of Stealing Procedure}
\begin{figure}[t]
  \centering
  \subfigure[][ER instance, $n = 1000$] {
    \label{fig:abl-er}
    \includegraphics[width=0.31\textwidth,height=0.15\textheight]{./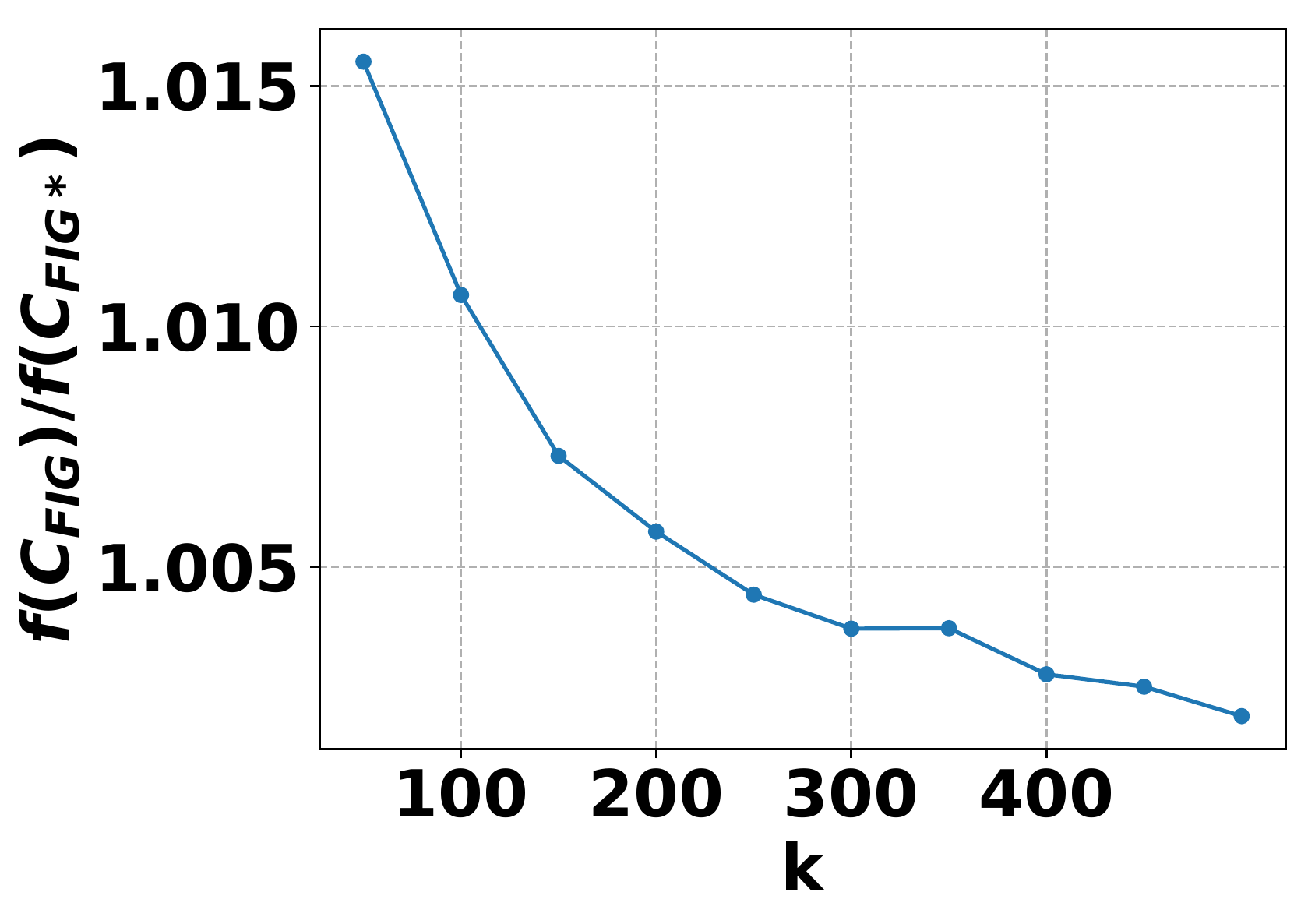}
  }
  \subfigure[][BA instance, $n = 10000$] {
    \label{fig:abl-ba}
    \includegraphics[width=0.31\textwidth,height=0.15\textheight]{./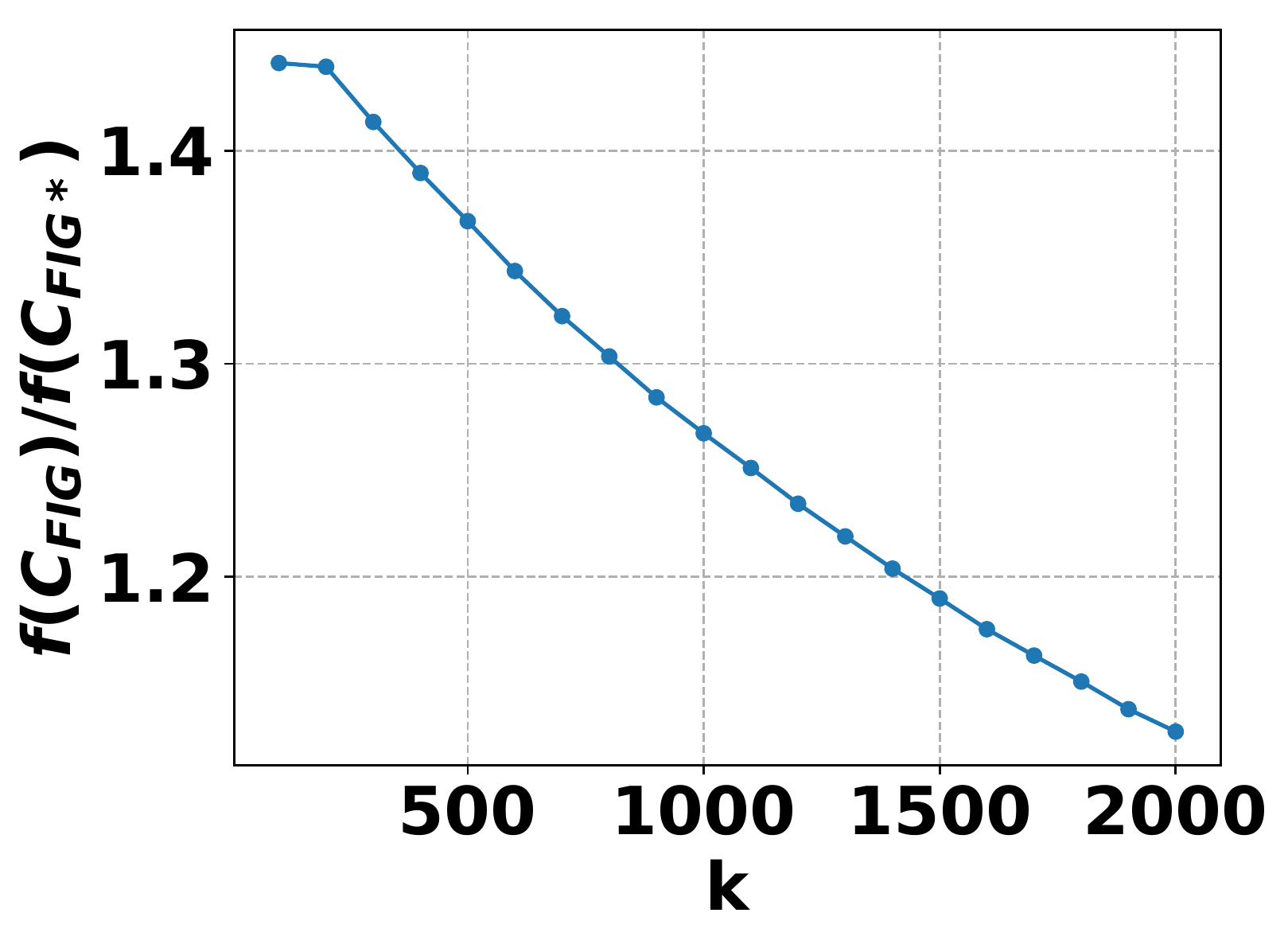}
  }
  \caption{Effect of stealing procedure on solution quality of \fig.}
  \label{fig:ablation} \vspace{-10pt}
\end{figure}
In Fig. \ref{fig:ablation} above, the effect of removing the stealing
    procedure is shown on the random graph instances.
    Let $C_{FIG}$ be the solution returned by FIG, and
    $C_{FIG*}$ be the solution returned by FIG with the 
    stealing procedure removed. Fig. \ref{fig:abl-er}
    shows that on the ER instance, the stealing procedure
    adds at most $1.5\%$ to the solution value; however,
    on the BA instance, Fig. \ref{fig:abl-ba} shows that the
    stealing procedure contributes up to $45\%$ increase
    in solution value, although this effect degrades with larger $k$. 
    This behavior may be explained by the interlaced greedy process 
    being forced to leave good elements out of its solution, which
    are then recovered during the stealing procedure. 
    \clearpage
    \section{Acknowledgements}
    The work of A. Kuhnle was partially supported by
    Florida State University and
    the Informatics Institute of the University of Florida.
    Victoria G. Crawford and the anonymous reviewers
    provided helpful feedback which improved
    the paper.
\bibliographystyle{plainnatfixed}
\bibliography{mend}
\clearpage
\appendix
\section{Proof of Theorem \ref{thm:tg}} \label{apx:proof-thm-tg}
\begin{proof}[Proof of Theorem \ref{thm:tg}]
  \begin{lemma} \label{lemm:minus}
    $$ 4f(C) \ge f\left(O \setminus \{ a_0 \} \right). $$
  \end{lemma}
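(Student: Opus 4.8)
The plan is to bound $f(O \setminus \{a_0\})$ using only the first pair of interlaced greedy sequences $A_0 \subseteq \cdots \subseteq A_k$ and $B_0 \subseteq \cdots \subseteq B_k$; the sequences $D,E$ are reserved for the part of Theorem \ref{thm:tg} that handles the case $a_0 \in O$. Write $O' = O \setminus \{a_0\}$. The two ingredients are: (i) the submodular inequality for disjoint sets, $f(O' \cup A_k) + f(O' \cup B_k) \ge f(O') + f(O' \cup A_k \cup B_k) \ge f(O')$, which holds because $A_k \cap B_k = \emptyset$ and $f \ge 0$; and (ii) a Nemhauser-type greedy estimate adapted to the interlaced selection, namely $f(O' \cup A_k) \le 2 f(C)$ and $f(O' \cup B_k) \le 2 f(C)$. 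Granting (ii), combining with (i) gives $f(O') \le 4 f(C)$, which is the claim.

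To prove $f(O' \cup A_k) \le 2 f(C)$: telescoping together with submodularity gives $f(O' \cup A_k) - f(A_k) \le \sum_{o \in O' \setminus A_k} f_o(A_k)$, so since $f(A_k) \le f(C)$ it suffices to bound the right-hand side by $f(C)$. I would construct an injection $o \mapsto i_o$ from $O' \setminus A_k$ onto $\{0,1,\dots,L-1\}$, where $L = |O' \setminus A_k| \le k$, such that $o$ is a legal candidate for the $A$-greedy at step $i_o$, i.e. $o \notin A_{i_o} \cup B_{i_o}$. Interlacing makes this possible on the $A$-side: an element of $O'$ lying outside $B$ is legal at every $A$-step, while an element of $O'$ that entered $B$ as $b_j$ is legal at every $A$-step $i \le j$ (when $a_i$ is chosen, the forbidden set is $A_i \cup B_i$, which contains among $B$-elements only $b_0,\dots,b_{i-1}$); since $b_0 \notin B_0 = \emptyset$, even $b_0$ is legal at step $0$, so no element is ever totally blocked on the $A$-side, and Hall's condition for the injection into the first $L$ steps is straightforward to verify. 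Given such an injection, submodularity and greedy optimality give $f_o(A_k) \le f_o(A_{i_o}) \le f_{a_{i_o}}(A_{i_o})$ for each $o$, and summing over the image $\{0,\dots,L-1\}$ yields $\sum_{o \in O' \setminus A_k} f_o(A_k) \le \sum_{i=0}^{L-1} f_{a_i}(A_i) = f(A_L) - f(\emptyset) \le f(A_L) \le f(C)$, where the last step uses that $C$ is the best among all prefixes $A_0,\dots,A_k$. The argument for $B$ is symmetric, with one exception: $a_0$ is never a legal candidate for the $B$-greedy (when $b_i$ is chosen, $a_0 \in A_{i+1}$ lies in the forbidden set), which is exactly why $a_0$ must be removed from $O$ at this stage; after removing it, $O' \setminus B_k$ again injects onto the first $|O' \setminus B_k|$ $B$-steps, and the identical computation gives $f(O' \cup B_k) \le 2 f(C)$.

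The main obstacle is the combinatorial bookkeeping in the greedy estimate (ii): one must track, at each step of the interlaced process, precisely which elements of $O'$ are forbidden, verify Hall's condition for an injection into an \emph{initial} block of greedy steps (so that the resulting partial sum of marginal gains telescopes to $f(A_L) - f(\emptyset)$, and not merely to the full $f(A_k)$), and thereby isolate why $a_0$ — and no other element — obstructs the $B$-side. Working with the best-prefix set $C$ that the algorithm actually returns, rather than with $A_k$ itself, is what lets this go through with no monotonicity or sign hypotheses on the greedy marginal gains; everything else is routine submodular algebra.
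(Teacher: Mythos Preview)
Your proposal is correct and follows essentially the same route as the paper: bound $f(O\cup A_k)\le 2f(C)$ via the greedy-prefix telescope (paper: order $O\setminus A_k$ so that $o_i\notin B_i$ and compare to $f_{a_i}(A_i)$), bound $f((O\setminus\{a_0\})\cup B_k)\le 2f(C)$ analogously (paper: order $O\setminus(\{a_0\}\cup B_k)$ so that $o_i\notin A_{i+1}$), and then combine by submodularity and disjointness of $A_k,B_k$. The only cosmetic differences are that you write $O'=O\setminus\{a_0\}$ on both sides (harmless, since $a_0\in A_k$ forces $O'\cup A_k=O\cup A_k$) and that you phrase the existence of the ordering via Hall's condition, whereas the paper exhibits it directly by listing the elements of $\hat O$ that lie in the other greedy set in their order of insertion.
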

  \begin{proof}
      Let $A = \argmax_{i \in [k + 1]} f( A_i )$. 
  Let $\hat{O} = O \setminus A_k = \{o_0, \ldots, o_{l-1}\}$ be ordered such that 
  for each $i \in [l]$, $o_i \not \in B_i$; this ordering is
  possible since $B_0 = \emptyset$ and $l \le k$. Also,
  for each $i \in [l]$, let $\hat{O}_i = \{o_0,\ldots,o_{i-1}\}$,  and let $\hat{O}_0 = \emptyset$.
  Then 
  \begin{align*}
    f( O \cup A_k ) - f( A_k ) &= \sum_{i=0}^{l-1} f_{o_i} ( \hat{O}_i \cup A_k )\\
    &\le \sum_{i=0}^{l-1} f_{o_i} (A_i) \\
    &\le \sum_{i=0}^{l-1} f_{a_i} (A_i) = f( A_l ), \\
  \end{align*}
  where the first inequality follows from submodularity, the 
  second inequality follows from the greedy choice $a_i = \argmax_{x \in [n] \setminus (A_i \cup B_i)} f_x( A_{i} )$ and the fact
  that $o_i \not \in B_i$. Hence
  \begin{equation} \label{ineq:A}
    f(O \cup A_k) \le f(A_l) + f(A_k) \le 2f(A).
  \end{equation}

  Let 
  $B = \argmax_{i \in [k + 1]} f( B_i )$.
  Let $\hat{O} = O \setminus \left( \{ a_0 \} \cup B_k \right) = \{o_0, \ldots, o_{l-1}\}$ be ordered such that 
  for each $i \in [l]$, $o_i \not \in A_{i + 1}$; this ordering is
  possible since $A_1 = \{ a_0 \}$, $a_0 \not \in \hat{O}$, and $l \le k$. Also,
  for each $i \in [l]$, let $\hat{O}_i = \{o_0,\ldots,o_{i-1}\}$,
  and let $\hat{O}_0 = \emptyset$.
  Then 
  \begin{align*}
    f( ( O \setminus \{ a_0 \} ) \cup B_k ) - f( B_k ) &= \sum_{i=0}^{l-1} f_{o_i} ( \hat{O}_i \cup B_k )\\
    &\le \sum_{i=0}^{l-1} f_{o_i} (B_i) \\
    &\le \sum_{i=0}^{l-1} f_{b_i} (B_i) = f( B_l ), \\
  \end{align*}
  where the first inequality follows from submodularity, the 
  second inequality follows from the greedy choice $b_i = \argmax_{x \in [n] \setminus (A_{i + 1} \cup B_i)} f_x( B_{i} )$ and the fact
  that $o_i \not \in A_{i + 1}$. Hence
  \begin{equation} \label{ineq:B}
    f( ( O \setminus \{ a_0 \} )\cup B_k) \le f(B_l) + f(B_k) \le 2f(B).
  \end{equation}
  By inequalities (\ref{ineq:A}), (\ref{ineq:B}), the fact that $A_k \cap B_k = \emptyset$, and submodularity, it holds that
  \begin{equation*}
    f( O \setminus \{ a_0 \} ) \le f( O \cup A_k ) + f( ( O \setminus \{ a_0 \} \cup B_k ) \le 2 ( f(A) + f(B) ) \le 4 f(C).  
  \end{equation*}
  \end{proof}
  \begin{lemma} \label{lemm:plus}
     $$ 4f(C) \ge f\left(O \cup \{ a_0 \} \right). $$
  \end{lemma}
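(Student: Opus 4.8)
The plan is to prove this exactly as Lemma~\ref{lemm:minus}, but using the \emph{second} interlaced round of \ig (the one maintaining $D$ and $E$, both initialized to $\{a_0\}$) in place of the first. First I would establish the two bounds
\[ f(O \cup D_k) \le 2 f(C) \qquad\text{and}\qquad f(O \cup E_k) \le 2 f(C). \]
Since $D_1 = E_1 = \{a_0\}$ and every element selected into $D$ after the first is forbidden from $E$ and vice versa by the \argmax constraints of \ig, we have $D_k \cap E_k = \{a_0\}$, so $(O \cup D_k) \cap (O \cup E_k) = O \cup \{a_0\}$. Applying submodularity in the form $f(X \cap Y) + f(X \cup Y) \le f(X) + f(Y)$ with $X = O \cup D_k$ and $Y = O \cup E_k$ then yields
\[ f(O \cup \{a_0\}) + f(O \cup D_k \cup E_k) \le f(O \cup D_k) + f(O \cup E_k) \le 4 f(C), \]
and dropping the nonnegative term $f(O \cup D_k \cup E_k)$ gives the claim.

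To prove $f(O \cup D_k) \le 2 f(C)$ I would follow the telescoping argument of Lemma~\ref{lemm:minus}. Write $\hat O = O \setminus D_k = \{o_0, \ldots, o_{l-1}\}$, so $l \le |O| \le k$, set $\hat O_i = \{o_0, \ldots, o_{i-1}\}$ with $\hat O_0 = \emptyset$, and expand $f(O \cup D_k) - f(D_k) = \sum_{i=0}^{l-1} f_{o_i}(\hat O_i \cup D_k)$. For $i \ge 1$, submodularity gives $f_{o_i}(\hat O_i \cup D_k) \le f_{o_i}(D_i)$ because $D_i \subseteq D_k$, and once $\hat O$ has been ordered so that $o_i \notin E_i$ the greedy choice $d_i = \argmax_{x \in [n]\setminus(D_i \cup E_i)} f_x(D_i)$ gives $f_{o_i}(D_i) \le f_{d_i}(D_i)$; these terms telescope to $\sum_{i=1}^{l-1} f_{d_i}(D_i) = f(D_l) - f(D_1)$. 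For $i = 0$ I would use that $o_0 \ne a_0$ (as $a_0 \in D_k$) together with the fact that $a_0$ is the maximum singleton of \ig: $f_{o_0}(D_k) \le f_{o_0}(\emptyset) \le f(\{o_0\}) \le f(\{a_0\}) = f(D_1)$. Adding everything, the $f(D_1)$ contributions cancel and $f(O \cup D_k) \le f(D_k) + f(D_l) \le 2 f(C)$, since $l \le k$ so both $f(D_k)$ and $f(D_l)$ are among the values defining $C$. The bound for $E$ is the mirror image: order $O \setminus E_k$ so that $o_i \notin D_{i+1}$, which is what is needed to invoke the greedy choice $e_i = \argmax_{x \in [n]\setminus(D_{i+1} \cup E_i)} f_x(E_i)$.

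The step I expect to be the main obstacle — and the reason \ig needs a second round at all — is the $i = 0$ term. Because $D$ and $E$ begin at $\{a_0\}$ rather than at $\emptyset$, only $k-1$ greedy steps $d_1, \ldots, d_{k-1}$ are available to charge the up-to-$k$ elements of $\hat O$ (the case $a_0 \notin O$ being the tight one), so $o_0$ cannot be paired with a greedy step; charging it to the initial singleton $\{a_0\}$ is legitimate only because $f(\{a_0\}) = \max_{x} f(\{x\})$, which is exactly the extra structure the second round supplies. A secondary point to check is that the required orderings of $\hat O$ and $O \setminus E_k$ exist: an element that happens to coincide with some $e_j$ (resp.\ $d_j$) must be placed in one of the first $j$ slots, and a short Hall / systems-of-distinct-representatives argument — using that the $m$-th such constrained element has greedy index at least $m$ — shows a valid assignment always exists, just as in the orderings used in Lemma~\ref{lemm:minus}. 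Everything else is the same routine telescoping bookkeeping.
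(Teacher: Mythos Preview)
Your proposal is correct and follows essentially the same approach as the paper: both use the second interlaced round, prove $f(O\cup D_k)\le 2f(C)$ and $f(O\cup E_k)\le 2f(C)$ via the same telescoping-plus-greedy-choice argument, and combine them via submodularity and $D_k\cap E_k=\{a_0\}$. The only cosmetic difference is that the paper folds your separate $i=0$ case into the sum by implicitly setting $D_0=E_0=\emptyset$ and $d_0=e_0=a_0$ (noting that $a_0=\argmax_x f(\{x\})$ makes this a valid ``greedy step''), which is exactly the charging to the initial singleton you spell out by hand.
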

  \begin{proof}
    Let $D = \argmax_{i \in [k + 1]} f( A_i )$. 
  Let $\hat{O} = O \setminus D_k = \{o_0, \ldots, o_{l-1}\}$ be ordered such that 
  for each $i \in [l]$, $o_i \not \in E_i$; this ordering is
  possible since $E_0 = \emptyset$ and $l \le k$. Also,
  for each $i \in [l]$, let $\hat{O}_i = \{o_0,\ldots,o_{i-1}\}$,
  and let $\hat{O}_0 = \emptyset$.
  Then 
  \begin{align*}
    f( O \cup D_k ) - f( D_k ) &= \sum_{i=0}^{l-1} f_{o_i} ( \hat{O}_i \cup D_k )\\
    &\le \sum_{i=0}^{l-1} f_{o_i} (D_i) \\
    &\le \sum_{i=0}^{l-1} f_{d_i} (D_i) = f( D_l ), \\
  \end{align*}
  where the first inequality follows from submodularity, the 
  second inequality follows from the greedy choice $d_i = \argmax_{x \in [n] \setminus (D_i \cup E_i)} f_x( D_{i} )$ and the fact
  that $o_i \not \in E_i$. Hence
  \begin{equation} \label{ineq:D}
    f(O \cup D_k) \le f(D_l) + f(D_k) \le 2f(D).
  \end{equation}

  Let 
  $E = \argmax_{i \in [k + 1]} f( E_i )$.
  Let $\hat{O} = O \setminus E_k = \{o_0, \ldots, o_{l-1}\}$ be ordered such that 
  for each $i \in [l]$, $o_i \not \in D_{i + 1}$; this ordering is
  possible since $D_1 = \{ a_0 \}$, $a_0 \not \in \hat{O}$ (since $a_0 \in E_k$), 
  and $l \le k$. Also,
  for each $i \in [l]$, let $\hat{O}_i = \{o_0,\ldots,o_{i-1}\}$,
  and let $\hat{O}_0 = \emptyset$.
  Then 
  \begin{align*}
    f( O \cup E_k ) - f( E_k ) &= \sum_{i=0}^{l-1} f_{o_i} ( \hat{O}_i \cup E_k )\\
    &\le \sum_{i=0}^{l-1} f_{o_i} (E_i) \\
    &\le \sum_{i=0}^{l-1} f_{e_i} (E_i) = f( E_l ), \\
  \end{align*}
  where the first inequality follows from submodularity, the 
  second inequality follows from the greedy choices
  $e_0 = \argmax_{x\in [n]} f(x)$, and if $i > 0$, $e_i = \argmax_{x \in [n] \setminus (D_{i + 1} \cup E_i)} f_x( E_{i} )$ and the fact
  that $o_i \not \in D_{i + 1}$. Hence
  \begin{equation} \label{ineq:E}
    f( ( O \cup E_k) \le f(E_l) + f(E_k) \le 2f(E).
  \end{equation}
  By inequalities (\ref{ineq:D}), (\ref{ineq:E}), the fact that $D_k \cap E_k = \{ a_0 \}$, and submodularity, it holds that
  \begin{equation*}
    f( O \cup \{ a_0 \} ) \le f( O \cup D_k ) + f( ( O  \cup E_k ) \le 2 ( f(D) + f(E) ) \le 4 f(C).  
  \end{equation*}
  \end{proof}
  The proof of the theorem follows from Lemmas \ref{lemm:minus}, \ref{lemm:plus}, and the fact that one of the statements $a_0 \in O$ or $a_0 \not \in O$ must hold;
  hence, either $O \cup \{ a_0 \} = O$ or $O \setminus \{ a_0 \} = O$.
\end{proof}
\section{Proofs for Theorem \ref{thm:ftg}} \label{apx:proof-thm-ftg}
\begin{proof}[Proof of Inequality \ref{ineq:fast-A}]
  Let $A = \{a_0, \ldots, a_{|A| - 1} \}$ be ordered as specified
  by \fig. Likewise, let
  $B = \{b_0, \ldots, b_{|B| - 1} \}$ be ordered as specified
  by \fig.
\begin{lemma} \label{lemma:fast-order-A}
  $O \setminus A = \{o_0,\ldots,o_{l-1} \}$ can be ordered such that
  \begin{equation} \label{ineq:fast-marg-A}
    f_{o_i}(A_i) \le (1 + 2 \delta) f_{a_i}(A_i),
  \end{equation}
  if $i \in [|A|]$.
\end{lemma}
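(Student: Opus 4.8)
The plan is to mirror the proof of Lemma~\ref{lemma:fast-order-B}, taking advantage of the fact that the greedy procedure building $A$ is the simpler of the two: it never has to set aside the element $a_0$, since the \add calls growing $A$ forbid only elements currently in $B$, and at the moment $a_i$ is chosen the set $B$ equals $B_i=\{b_0,\ldots,b_{i-1}\}$, which has exactly $i$ elements. First I would fix, for each $i\in[|A|]$, the value $\tau_{A_i}$ of $\tau$ at the instant \add added $a_i$ to $A$. Then I would order $O\setminus A=\{o_0,\ldots,o_{l-1}\}$ by listing the members of $(O\setminus A)\cap B$ first, in the order in which \add added them to $B$, and appending the remaining members of $O\setminus A$ arbitrarily.

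Next I would record the combinatorial point behind the ordering: $B_i$ consists of exactly the first $i$ elements added to $B$, so at most the first $i$ members of $(O\setminus A)\cap B$ --- a subset of $\{o_0,\ldots,o_{i-1}\}$ --- can lie in $B_i$, whence $o_i\notin B_i$; and $o_i\notin A_i$ because $A_i\subseteq A$ and $o_i\in O\setminus A$. With this in hand I would derive $f_{o_i}(A_i)\le(1+2\delta)f_{a_i}(A_i)$. If $\tau_{A_i}$ is the initial threshold $M$, then $f_{o_i}(A_i)\le f_{o_i}(\emptyset)\le M=\tau_{A_i}\le f_{a_i}(A_i)$ immediately; otherwise the threshold for $A$'s procedure took the value $\tau_{A_i}/(1-\delta)$ just before it was lowered to $\tau_{A_i}$, and during that stretch \add examined every element of $[n]$. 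At the instant $o_i$ was examined at that threshold --- necessarily before $a_i$ was added, since $a_i$ enters only after the threshold drops --- the current set $A$ is some prefix $A_i'\subseteq A_i$ and the forbidden set $B$ is a subset of $B_i$, so $o_i\notin B$ there; since $o_i$ was then an eligible candidate that \add declined, $f_{o_i}(A_i')<\tau_{A_i}/(1-\delta)$, and submodularity together with $f_{a_i}(A_i)\ge\tau_{A_i}$ yields
\[
 f_{o_i}(A_i)\le f_{o_i}(A_i')<\frac{\tau_{A_i}}{1-\delta}\le\frac{f_{a_i}(A_i)}{1-\delta}\le(1+2\delta)f_{a_i}(A_i),
\]
the last step using $0<\delta<1/2$, exactly as in Lemma~\ref{lemma:fast-order-B}.

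The main obstacle is the clause $f_{o_i}(A_i')<\tau_{A_i}/(1-\delta)$: it rests on the internal bookkeeping of \add across successive invocations --- that (i) each time the threshold is lowered the scan pointer $j$ is reset to $0$ and thereafter only advances, so every element of $[n]$ is indeed examined at the threshold $\tau_{A_i}/(1-\delta)$ before that threshold is lowered (the observation already used in the proof of Theorem~\ref{thm:ftg-speed}), and (ii) at the instant $o_i$ is examined at that threshold the set $B$ is still contained in $B_i$, so that $o_i\notin B$ and $o_i$ genuinely was a candidate \add passed over. Everything else is a routine transcription of the argument for $B$.
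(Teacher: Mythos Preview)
Your proposal is correct and follows essentially the same approach as the paper: order the elements of $(O\setminus A)\cap B$ by the order in which they were added to $B$, place the remaining elements of $O\setminus A$ afterward, use the resulting fact that $o_i\notin B_i$ to argue $o_i$ was an eligible candidate passed over at the previous threshold $\tau_{A_i}/(1-\delta)$, and then combine submodularity with $f_{a_i}(A_i)\ge\tau_{A_i}$ and $\delta<1/2$. Your write-up is in fact more careful than the paper's own proof: you explicitly dispose of the base case $\tau_{A_i}=M$ (where there is no ``previous threshold''), and you spell out the bookkeeping ensuring that at the moment $o_i$ is scanned at threshold $\tau_{A_i}/(1-\delta)$ the forbidden set is contained in $B_i$, so that $o_i$ really was eligible --- both points the paper leaves implicit.
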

\begin{proof}
  Order $o \in (O \setminus A) \cap B = \{o_0, \ldots, o_{\ell - 1} \} $ by
  the order in which these elements were added into $B$. Order the remaining
  elements of $O \setminus A$ arbitrarily. Then, when 
  $a_i$ was chosen by \add, it holds that $o_i \not \in B_i$.
  Also, it is true $o_i \not \in A_i$;
  hence $o_i$ was not added into some (possibly non-proper) subset $A'_i$ of $A_i$ 
  at the previous threshold value $\frac{\tau_{A_i}}{ (1 - \delta)}$.
  Hence $f_{o_i}(A_i) \le f_{o_i}(A'_i) < \frac{\tau_{A_i}}{(1 - \delta)}$,
  since $o_i \not \in B_i$.
  Since $f_{a_i}(A_i) \ge \tau_{A_i}$ and $\delta < 1/2$, 
  inequality (\ref{ineq:fast-marg-A}) follows.
\end{proof}  
  Order $\hat{O} = O\setminus A = \{o_0, \ldots, o_{l-1} \}$ as 
  indicated in the proof of Lemma \ref{lemma:fast-order-A},
  and let $\hat{O}_i = \{o_0, \ldots, o_{i-1} \}$, if $i \ge 1$,
  $\hat{O}_0 = \emptyset$.
  Then 
  \begin{align*}
    f( O \cup A ) - f( A ) &= \sum_{i=0}^{l-1} f_{o_i} ( \hat{O}_i \cup A )\\
    &= \sum_{i=0}^{|A| - 1} f_{o_i} ( \hat{O}_i \cup A ) + \sum_{i=|A|}^{l - 1} f_{o_i} ( \hat{O}_i \cup A ) \\
    &\le \sum_{i=0}^{|A| - 1} f_{o_i} ( A_i ) + \sum_{i=|A|}^{l - 1} f_{o_i} ( A ) \\
    &\le \sum_{i=0}^{|A| - 1} (1 + 2\delta ) f_{a_i} ( A_i ) + \sum_{i=|A|}^{l - 1} f_{o_i} ( A ) \\
    &\le ( 1 + 2 \delta ) f(A) + \delta M,
  \end{align*}
  where any empty sum is defined to be 0; the first inequality follows by submodularity, the second follows from Lemma \ref{lemma:fast-order-A}, and the third follows from the definition of $A$, and the facts that $\max_{x \in [n] \setminus B_{|A|}} f_x(A) < \stopGain $ and $l - |A| \le k$.
\end{proof}
\begin{proof}[Proof of Inequality \ref{ineq:fast-D}]
  As in the proof of Inequality \ref{ineq:fast-A}, it suffices
  to establish the following lemma.
\end{proof}
  \begin{lemma} 
  $O \setminus D  = \{o_0,\ldots,o_{l-1} \}$ can be ordered such that
  \begin{equation} \label{ineq:fast-marg-D}
    f_{o_i}(D_i) \le (1 + 2 \delta) f_{d_i}(D_i),
  \end{equation}
  for $i \in [|D|]$.
\end{lemma}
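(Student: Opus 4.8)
The plan is to mirror the proof of Lemma~\ref{lemma:fast-order-B} essentially verbatim, replacing the roles $(B, b_i, A, a_0)$ by $(D, d_i, E, a_0)$. First I would introduce, for each $i \in [|D|]$, the quantity $\tau_{D_i}$ equal to the value of the threshold $\tau$ at the moment $d_i$ was added into $D$ by \add. Then I would construct the ordering of $\hat{O} = O \setminus D$: order the elements of $(O \setminus D) \cap E_{i+1}$-relevant set by the order in which they were added into $E$ (note $E_1 = \{a_0\}$ and $a_0 \notin O \setminus D$ only if $a_0 \notin O$ — but here $a_0 \in D$, so $a_0 \notin O \setminus D$ automatically), and order the rest arbitrarily. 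This guarantees that when $d_i$ was selected, $o_i \notin E_{i+1}$, because the elements of $(O\setminus D)\cap E$ are exactly those that appear in $E$ in the order used, and the first element added to $E$ (namely $a_0$) lies in $D$, hence not in $\hat O$.

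Second, I would argue the two exclusions needed for the threshold comparison: $o_i \notin E_{i+1}$ (from the ordering just constructed, using that $E_1 = \{a_0\}$ and $a_0 \notin \hat{O}$) and $o_i \notin D_i$ (immediate since $D_i \subseteq D$ and $o_i \in O \setminus D$). The second exclusion means that when the \add subroutine last scanned at threshold $\tau_{D_i}/(1-\delta)$ — the value just before $\tau_{D_i}$ — it did not add $o_i$ to the partial set $D'_i \subseteq D_i$ it had built, even though $o_i \notin E$ was allowed as a candidate; hence $f_{o_i}(D'_i) < \tau_{D_i}/(1-\delta)$. By submodularity $f_{o_i}(D_i) \le f_{o_i}(D'_i) < \tau_{D_i}/(1-\delta)$. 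Combined with $f_{d_i}(D_i) \ge \tau_{D_i}$ and $\delta < 1/2$, this yields $f_{o_i}(D_i) < \frac{1}{1-\delta} f_{d_i}(D_i) \le (1 + 2\delta) f_{d_i}(D_i)$, which is exactly inequality~(\ref{ineq:fast-marg-D}).

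The only genuine subtlety — and the step I would watch most carefully — is the candidacy argument: I must be sure that at threshold $\tau_{D_i}/(1-\delta)$ the element $o_i$ was in fact eligible for selection into $D$, i.e.\ that $o_i \notin E$ at that time (not merely $o_i \notin E_{i+1}$). Since $E$ only grows, it would suffice to know $o_i \notin E$ at termination for the elements ordered by their appearance in $E$, but those are precisely the elements placed \emph{in} $E$, so I need the complementary statement that the arbitrarily ordered tail of $\hat O$ never entered $E$ either — which is automatic because $\hat O \cap E$ is entirely accounted for in the first block of the ordering. This is the same bookkeeping that appears in Lemma~\ref{lemma:fast-order-B}, just with $D,E$ in place of $B,A$; after it is in place, the telescoping-sum computation that finishes the proof of Inequality~\ref{ineq:fast-D} proceeds identically to the one shown for Inequality~\ref{ineq:fast-A}, splitting the sum at index $|D|$ and bounding the tail using that $\max_{x \in [n]\setminus E_{|D|+1}} f_x(D) < \stopGain$ together with $l - |D| \le k$.
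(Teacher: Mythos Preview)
Your proof is correct and follows essentially the same approach as the paper. One small point worth noting: you choose to mirror Lemma~\ref{lemma:fast-order-B} and therefore claim $o_i \notin E_{i+1}$, whereas the paper's own proof of this lemma instead mirrors Lemma~\ref{lemma:fast-order-A} and uses the weaker claim $o_i \notin E_i$. The latter is the natural statement, because in the second while loop $d_i$ is added \emph{before} $e_i$, so the excluded set $T$ passed to \add when $d_i$ is selected is $E_i$, not $E_{i+1}$. Your stronger claim $o_i \notin E_{i+1}$ does hold as well---precisely because $a_0 \in D$ forces $a_0 \notin O\setminus D$, which shifts the ordering by one---but it is not needed, and once you use $o_i \notin E_i$ your closing ``subtlety'' paragraph becomes unnecessary: since $E$ only grows, $o_i \notin E_i$ already guarantees $o_i$ was an eligible candidate at every earlier threshold.
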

\begin{proof}
  Order $o \in (O \setminus D ) \cap E = \{o_0, \ldots, o_{\ell - 1} \} $ by
  the order in which these elements were added into $E$. Order the remaining
  elements of $O \setminus D$ arbitrarily. Then, when 
  $d_i$ was chosen by \add, it holds that $o_i \not \in E_{i}$.
  Also, it is true $o_i \not \in D_i$;
  hence $o_i$ was not added into some (possibly non-proper) subset $D'_i$ of $D_i$ 
  at the previous threshold value $\frac{\tau_{D_i}}{ (1 - \delta)}$.
  Hence $f_{o_i}(D_i) \le f_{o_i}(D'_i) < \frac{\tau_{D_i}}{(1 - \delta)}$,
  since $o_i \not \in E_{i}$.
  Since $f_{d_i}(D_i) \ge \tau_{D_i}$ and $\delta < 1/2$, 
  inequality (\ref{ineq:fast-marg-D}) follows.
\end{proof}
\begin{proof}[Proof of Inequality \ref{ineq:fast-E}]
  As in the proof of Inequality \ref{ineq:fast-A}, it suffices
  to establish the following lemma.
  \begin{lemma} 
  $O \setminus E  = \{o_0,\ldots,o_{l-1} \}$ can be ordered such that
  \begin{equation} \label{ineq:fast-marg-E}
    f_{o_i}(E_i) \le (1 + 2 \delta) f_{e_i}(E_i),
  \end{equation}
  for $i \in [|E|]$.
\end{lemma}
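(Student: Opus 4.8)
The plan is to prove this lemma by exactly the argument already used for Lemma~\ref{lemma:fast-order-A}, Lemma~\ref{lemma:fast-order-B}, and the analogous lemma in the proof of Inequality~\ref{ineq:fast-D}, now with the roles of $D$ and $E$ interchanged. The two greedy procedures of the second \texttt{WHILE} loop of \fig are structurally symmetric, so essentially nothing new is needed; the care is entirely in the index bookkeeping. As before, for each $i$ let $\tau_{E_i}$ denote the value of $\tau$ at the moment \add placed $e_i$ into $E$, adopting the convention $E_0 = \emptyset$, $e_0 = a_0 = \argmax_{x \in [n]} f(x)$, and $\tau_{E_0} = M$.

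First I would fix the ordering of $\hat{O} = O \setminus E = \{o_0,\ldots,o_{l-1}\}$: list the elements of $(O \setminus E) \cap D$ first, in the order in which \fig added them into $D$, and then list the remaining elements of $O \setminus E$ arbitrarily. The crucial point is that $a_0 \in E$ (because $E_1 = \{a_0\}$), so $a_0 \notin O \setminus E$; hence the elements of $(O \setminus E) \cap D$, taken in their order of addition into $D$, form a subsequence of $d_1, d_2, \ldots$, and the $i$-th such element (counting from $0$) equals $d_j$ for some $j \ge i+1$, so it does not belong to $D_{i+1} = \{a_0, d_1, \ldots, d_i\}$. For indices $i$ past $|(O \setminus E) \cap D|$ the element $o_i$ lies outside $D$ altogether, so again $o_i \notin D_{i+1}$. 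Since \fig invokes $\add(E, D, e_i, \tau_E)$ and, for $i \ge 1$, adds $d_i$ to $D$ immediately before $e_i$ is selected, the forbidden set $T$ in effect when $e_i$ was chosen is exactly $D_{i+1}$, and the first set is $E_i$; thus $o_i$ was an admissible candidate at that step ($o_i \notin D_{i+1}$, and $o_i \notin E_i$ trivially since $E_i \subseteq E$) that \fig nevertheless did not place into $E$. (The index $i = 0$ corresponds to the seed $e_0 = a_0$ and is treated separately.)

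The marginal-gain estimate then transfers verbatim. Since $D$ only grows over the run, $o_i$ stays outside the forbidden set throughout the entire phase in which the $E$-threshold equals $\tau_{E_i}/(1-\delta)$; that phase ended with a full sweep of $[n]$ finding no admissible element of marginal gain at least $\tau_{E_i}/(1-\delta)$, so $o_i$ was examined and rejected against some (possibly improper) subset $E'_i \subseteq E_i$, giving $f_{o_i}(E'_i) < \tau_{E_i}/(1-\delta)$. Submodularity then yields $f_{o_i}(E_i) \le f_{o_i}(E'_i) < \tau_{E_i}/(1-\delta) \le f_{e_i}(E_i)/(1-\delta) \le (1+2\delta) f_{e_i}(E_i)$, using $f_{e_i}(E_i) \ge \tau_{E_i}$ and $\delta < 1/2$. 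The lone boundary case, namely $i = 0$ or more generally any $i$ for which $e_i$ was added at the top threshold $M$ (so no phase at $\tau_{E_i}/(1-\delta)$ occurred), is handled directly by submodularity and the definition of $M$: $f_{o_i}(E_i) \le f_{o_i}(\emptyset) = f(\{o_i\}) - f(\emptyset) \le M - f(\emptyset) = f_{a_0}(\emptyset) = f_{e_0}(E_0) \le f_{e_i}(E_i)$.

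I expect the only real obstacle to be keeping the indices straight in the second \texttt{WHILE} loop: it is precisely the fact that the forbidden set for $e_i$ is $D_{i+1}$ rather than $D_i$, together with $a_0 \in E$, that makes the chosen ordering of $(O \setminus E) \cap D$ deliver $o_i \notin D_{i+1}$. Once this lemma is established, the proof of Inequality~\ref{ineq:fast-E} finishes exactly as the proof of Inequality~\ref{ineq:fast-A}: telescope $f(\hat{O} \cup E) - f(E) = \sum_{i=0}^{l-1} f_{o_i}(\hat{O}_i \cup E)$, split the sum at $i = |E|$, bound the first block by $(1+2\delta) f(E)$ using this lemma, and bound the second block by $\delta M$ using the stopping condition of \fig together with $l - |E| \le k$, so I do not anticipate any further difficulty there.
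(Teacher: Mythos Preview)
Your proposal is correct and follows essentially the same approach as the paper: order $(O\setminus E)\cap D$ by insertion into $D$, use $a_0\in E$ to ensure $o_i\notin D_{i+1}$, and combine the threshold rejection at $\tau_{E_i}/(1-\delta)$ with submodularity and $f_{e_i}(E_i)\ge\tau_{E_i}$. Your extra explicit treatment of the boundary case $i=0$ (and, more generally, $\tau_{E_i}=M$) via $f_{o_i}(E_i)\le f(\{o_i\})-f(\emptyset)\le M\le f_{e_i}(E_i)$ is a detail the paper's proof silently absorbs; it is sound and does not change the argument.
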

\begin{proof}
  Order $o \in (O \setminus E ) \cap D = \{o_0, \ldots, o_{\ell - 1} \} $ by
  the order in which these elements were added into $D$. Order the remaining
  elements of $O \setminus E$ arbitrarily. Then, when 
  $e_i$ was chosen by \add, it holds that $o_i \not \in D_{i + 1}$, since
  $D_1 = \{ a_0 \}$ and $a_0 = d_0 \not \in O \setminus E$.
  Also, it is true $o_i \not \in E_i$;
  hence $o_i$ was not added into some (possibly non-proper) subset $E'_i$ of $E_i$ 
  at the previous threshold value $\frac{\tau_{E_i}}{ (1 - \delta)}$.
  Hence $f_{o_i}(E_i) \le f_{o_i}(E'_i) < \frac{\tau_{E_i}}{(1 - \delta)}$,
  since $o_i \not \in D_{i+1}$.
  Since $f_{e_i}(E_i) \ge \tau_{E_i}$ and $\delta < 1/2$, 
  inequality (\ref{ineq:fast-marg-E}) follows.
\end{proof}
\end{proof}
\section{Stealing Procedure for FastInterlaceGreedy} \label{apx:steal}
In this section, an $O(k)$ procedure is described,
which may improve the quality of the solution found by
FastInterlaceGreedy (a similar procedure could also be employed for InterlaceGreedy).

Let $A,B,C,D,E$ have their values at the termination of FastInterlaceGreedy.
Then calculate the sets $G = \{ B_c = f(C) - f(C \setminus \{ c \}) : c \in C \}$
and  $H = \{ A_x = f( C \cup \{ x \} ) - f(C) : x \in A \cup B \cup D \cup E \}$.
Then sort $G = (B_{c_1}, \ldots, B_{c_k})$ in non-decreasing order and sort $H = (A_{x_1}, \ldots, A_{x_l})$ in non-increasing order. 
Computing and sorting these sets requires $O(k \log k)$ time (and only $O(k)$ queries to $f$).

Finally, iterate through the elements of $G$ in the sorted order, and 
if $B_{c_i} < A_{x_i}$ then $C$ is assigned $C \setminus \{ c_i \} \cup \{ x_i \}$
if this assignment increases the value $f( C )$.

\section{Proof for Tight Examples}
\label{apx:tight}
\begin{proof}[Proof of Prop. \ref{prop:sm}]
  Submodularity will be verified by checking the inequality
  \begin{equation} \label{submod:apx}
    f(S) + f(T) \ge f(S \cup T) + f (S \cap T)
  \end{equation}
  for all $S, T \subseteq U$. 
  \begin{itemize}
\item \textbf{case $a \in S \cap T$, $b \not \in T \cup S$.}       
      Then Ineq. (\ref{submod:apx}) becomes
      $$ \frac{ |S \cap O| }{ 2k } + \frac{ |T \cap O| }{ 2k } + \frac{2}{k} \ge \frac{|S \cap T \cap O |}{ 2k } + \frac{ |(S \cup T) \cap O| }{ 2k } + \frac{2}{k}, $$
      which holds.
  \item \textbf{case $a \in S \setminus T$, $b \in T \setminus S$.}       
      Then Ineq. (\ref{submod:apx}) becomes
      $$ \frac{ |S \cap O| }{ 2k } + \frac{ |T \cap O| }{ 2k } + \frac{2}{k} \ge \frac{|S \cap T \cap O |}{ k }, $$
      which holds.
    \item \textbf{case $a \in S \setminus T$, $b \in S \setminus T$.}  Then Ineq. (\ref{submod:apx}) becomes
      $$ \frac{ |T \cap O| }{k} \ge \frac{|S \cap T \cap O |}{ k }, $$
      which holds.
    \item \textbf{case $a \in S \setminus T$, $b \in S \cap T$.}  Then Ineq. (\ref{submod:apx}) becomes
      $$ \frac{ |T \cap O| }{ 2k } + \frac{1}{k} \ge \frac{|S \cap T \cap O |}{ 2k } + \frac{1}{k}, $$
      which holds.
    \item \textbf{case $a \in S \cap T$, $b \in S \cap T$.}  Then Ineq. (\ref{submod:apx}) becomes
      $$ 0 \ge 0, $$
      which holds
    \item \textbf{case $a \not \in S \cup T$, $b \not \in S \cup T$.}  Then Ineq. (\ref{submod:apx}) becomes
      $$ | S \cap O | + |T \cap O| \ge | (S \cup T) \cap O | + | (S \cap T) \cap O |,$$
      which holds.
    \item \textbf{case $a \in S \setminus T$, $b \not \in S \cup T$.}  Then Ineq. (\ref{submod:apx}) becomes
      $$ \frac{| S \cap O |}{2k} + \frac{1}{k} + \frac{|T \cap O|}{k} \ge \frac{| (S \cup T) \cap O |}{2k} + \frac{1}{k} + \frac{| (S \cap T) \cap O |}{k},$$
      which holds.
  \end{itemize}
  The remaining cases follow symmetrically.
\end{proof}

%%% Local Variables:
%%% mode: latex
%%% TeX-master: "nips.tex"
%%% End:

\end{document}